\documentclass{article}

\usepackage[main, final]{neurips_2026}

\usepackage[utf8]{inputenc}
\usepackage[T1]{fontenc}
\usepackage{hyperref}
\usepackage{url}
\usepackage{booktabs}
\usepackage{amsfonts}
\usepackage{nicefrac}
\usepackage{microtype}
\usepackage{xcolor}

\usepackage{graphicx}
\usepackage{subcaption}
\usepackage{amsmath}
\usepackage{amssymb}
\usepackage{mathtools}
\usepackage{amsthm}
\usepackage{bbm}
\usepackage[capitalize,noabbrev]{cleveref}

\DeclareMathOperator*{\argmin}{arg\,min}

\theoremstyle{plain}
\newtheorem{theorem}{Theorem}[section]

\newtheorem{corollary}[theorem]{Corollary}

\theoremstyle{definition}
\newtheorem{definition}[theorem]{Definition}
\newtheorem{assumption}[theorem]{Assumption}
\theoremstyle{remark}
\newtheorem{remark}[theorem]{Remark}

\title{Beyond IPS: Reliable Counterfactual Evaluation in Multi-Stage Ad Systems without Logged Propensities}

\author{%
  Mohsen Malmir\thanks{Corresponding author} \\
  Meta Platforms, Inc. \\
  \texttt{mohsenm@meta.com} \\
  \And
  Mohamed A.\ Radwan \\
  Meta Platforms, Inc. \\
  \texttt{attia@meta.com} \\
  \And
  Houssam Nassif \\
  Meta Platforms, Inc. \\
  \texttt{houssamn@meta.com} \\
  \And
  Murat Ali Bayir \\
  Meta Platforms, Inc. \\
  \texttt{mbayir@meta.com} \\
}

\begin{document}

\maketitle

\begin{abstract}

\par{
Offline evaluation of ad-ranking policies in multi-stage delivery systems is challenging because action propensities are often unavailable, limiting inverse propensity scoring (IPS). Reward-model-based estimators such as the Direct Method (DM) avoid propensities but can be biased when candidate policies induce impression distributions that differ from the logged data. We formalize DM-based counterfactual replay and propose mixture-weighted reward modeling, which reweights logged data toward a mixture of candidate-policy distributions using density ratios from a multiclass domain classifier. We bound each candidate's value-estimation error by overlap, cold-start, and replay terms, and derive an asymptotic rule that uses the cold-start mass to decide whether to trust offline selection or route candidates to online A/B testing. On an ads-delivery simulator and on production traffic from a large-scale CTR ranking system, the mixture-weighted reward model improves policy selection over its unweighted counterpart in both settings, with the largest gains at low overlap in simulation. Combined with a doubly robust correction, it attains the highest Top-1 accuracy in production and at low and medium overlap in simulation.
}

\end{abstract}

\section{Introduction}

\par{
Offline evaluation of ranking models is a critical safeguard in large-scale advertising and recommendation systems, where deploying a poorly performing model can degrade user experience and business outcomes. The industry invests heavily in developing complex deep neural network (DNN) based ranking models, but before any new policy is launched, it must be rigorously assessed using historical production logs to ensure reliability and safety~\cite{yi2013predictiveofflineonline}. More broadly, offline evaluation plays two complementary roles in modern ad systems. First, it acts as a risk mitigation mechanism by identifying problematic models before they are exposed to real user traffic. Second, it serves as a scaling factor for online experimentation: large-scale ad systems typically have limited A/B testing capacity, making it infeasible to evaluate many candidate model changes online at once. Offline evaluation therefore provides a high-throughput filtering stage, allowing practitioners to prioritize the most promising policies for scarce online experimentation resources~\citep{radwan2024counterfactual, Malmir2026ImpressionShare}.
}

\par{
Accurate offline evaluation in modern advertising systems is fundamentally challenging due to the complexity of the underlying data generating process. In modern ad platforms, exposure is determined by a multi-stage pipeline that combines multiple ranking models and platform policies including auctions, filtering, and business rules rather than a single stochastic ranking policy~\citep{BottouPQCCPRSS13,facebook_practical_ctr}. Consequently, the observed data reflects a time-varying mixture of production policies~\citep{bayir2019genie}. Action propensities are typically neither logged nor well-defined, rendering inverse propensity based estimators inapplicable in practice~\citep{yi2013predictiveofflineonline,breakdeterminismauction,recsyscausalsurvey2022}.
}

\par{
Predictive metrics such as log-loss, AUROC, and Relative Information Gain (RIG) measure accuracy on logged data but do not estimate the value of the actions a new policy would take~\citep{facebook_practical_ctr,wide_deep,yi2013predictiveofflineonline}. Counterfactual evaluation estimates this value on historical contexts~\citep{zeng2021causal}, either by reweighting logged data with inverse propensity scoring (IPS), its self-normalized variants~\citep{BottouPQCCPRSS13}, and doubly robust (DR) estimators~\citep{dudik2011doublyrobust}, including extensions to deterministic logging and low overlap~\citep{breakdeterminismauction,khan2024beyondoverlap}, or by simulation and direct replay with learned response models~\citep{bayir2019genie,graepel2010adpredictor,jeunen2022auctiongym}. IPS-style estimators require logged propensities, which unlogged \emph{post-ranking system operations} in multi-stage ad systems (e.g., caching, audience targeting) make unavailable, while simulation and replay must predict outcomes for actions that were never logged, often where overlap with the logs is limited~\citep{canamares2020offlineoptions,khan2024beyondoverlap}.
}
\par{
To address these limitations, recent work has focused on reward modeling approaches that learn a predictive model of user response from logged data, and use it to estimate the value of new policies by replaying them on historical contexts~\citep{zeng2021causal,jeunen2021pessimistic,swaminathan2015counterfactual}.
When the distribution of counterfactual actions under a target policy differs substantially from the distribution induced by the logging mixture, domain shift can make reward predictions unreliable on the actions that matter for evaluation. In this case, a reward model trained on logged data may exhibit systematic misprediction on target-policy actions and produce biased policy value estimates~\citep{khan2024beyondoverlap}.

\par{
In this paper, we develop a reward-modeling framework for offline counterfactual evaluation under policy-induced distribution shift, and analyze its reliability theoretically and empirically. Our contributions are:}

(1)~\textbf{Mixture-weighted reward model training without propensities.} We introduce a practical method for direct-method counterfactual evaluation under multi-stage delivery: a single reward model trained to be accurate across a set of candidate policies by reweighting the supervised loss with mixture-induced density ratios estimated from a multiclass domain classifier.\\
(2)~\textbf{Reliability of replay Direct Method.} We derive a value error bound that holds with high probability for the Direct Method estimator used in our experiments. The bound separates reward model error where the logging data provide support, error from impressions with limited overlap, and error due to a finite evaluation sample. It makes explicit the finite-sample tradeoff between reward model error on well-supported impressions and error from impressions with limited logging support. Under consistency, only the error from impressions with little or no logging support remains asymptotically, motivating a rule for deciding when offline evaluation is sufficient and when online A/B testing is needed.\\
(3)~\textbf{Empirical evaluation on synthetic and production data.} We benchmark six estimators on a configurable ads simulator and on production traffic from a large-scale CTR ranking system, and show that the mixture-weighted reward model improves policy selection over its unweighted counterpart in both settings, with the largest gains at low overlap.
}
\section{Literature Review}
\par{
A substantial body of applied research evaluates ad ranking and click-through prediction models using predictive metrics computed over logged impressions, such as log-loss, AUROC, AUC-lift, and NDCG. These approaches have powered large-scale ad systems. Examples include Facebook’s logistic-regression and feature-engineering pipeline \citep{facebook_practical_ctr}, the adoption of deep architectures such as Wide \& Deep for CTR estimation \citep{wide_deep}, and multi-task models such as ESMM that jointly learn click and conversion probabilities on display-ad traffic \citep{esmm}.
While effective at measuring predictive performance under the production exposure distribution, such offline metrics do not capture a model’s counterfactual effect on ad delivery, user experience, or revenue.
This motivates the need for counterfactual and causal evaluation methods that estimate what would happen if a new ranking model were deployed.
}

\par{
A natural alternative to predictive metrics is off-policy evaluation (OPE), which estimates the value of a new ranking policy using data logged under a different policy.
Classical OPE methods include IPS and its self-normalized variants \citep{BottouPQCCPRSS13}, doubly robust estimators \citep{dudik2011doublyrobust}, and model-based or direct-method approaches \citep{swaminathan2015counterfactual}, as well as more recent advances addressing deterministic logging and low-overlap regimes \citep{breakdeterminismauction, khan2024beyondoverlap}. Propensity-based methods provide unbiased or partially identified estimates when key assumptions hold: the logged propensities must be known, the target policy must have support under the logging policy, and no unobserved transformations may occur between action selection and outcome. In multi-stage ad systems, these conditions are violated because downstream \emph{post-ranking system operations}
(e.g., auctions, audience targeting and business rules) transform the ranking model’s output in ways that are unlogged, nonlinear, and not recoverable from impression logs. Consequently, OPE estimators cannot identify the true value of a ranking model, limiting their reliability for offline evaluation.

}

\par{
A complementary line of work seeks to evaluate or optimize ad systems via simulation, typically by replaying historical auctions and combining logged data with predictive models. Early systems at Bing combined web-scale Bayesian CTR models with auction replay to study policy changes \citep{graepel2010adpredictor}. Later work formalized this idea as auction-based offline evaluation: \citet{yi2013predictiveofflineonline} proposed simulated metrics by rerunning sponsored-search auctions offline and estimating user responses from historic clicks; \citet{BottouPQCCPRSS13} discussed counterfactual ad-placement evaluation via a predictor that runs the auction and a click model trained offline; \citet{bayir2019genie} introduced an “open box” counterfactual policy estimator for optimizing sponsored-search marketplaces; and \citet{bai2018practicalauction} built an auction simulation system to optimize constrained mechanism parameters on an e-commerce platform. \citet{li2023practical} frames large-scale sponsored-search auctions as explicitly multi-objective, optimizing tradeoffs among platform, advertiser, and user objectives via an online optimization formulation. More recently, open-source and industrial simulators such as AuctionGym \citep{jeunen2022auctiongym} and Uber’s ads simulation framework \citep{uber2023adssim} provide environments for testing bidding and allocation strategies under synthetic or replayed traffic. These approaches are powerful when the auction logic and downstream system operations are fully specified and stable, so that the simulator achieves high fidelity to production behavior.
}
\section{Approach}
\label{sec:approach}

\subsection{Problem Setup}
\label{sec:problem_setup}

\par{
Let $x\in\mathcal{X}$ denote a context (user features) and let $a\in\mathcal{A}(x)$ denote an action (ad selection) available in context~$x$.
A \emph{(ranking) policy} $\pi$ is a deterministic mapping from contexts to actions: $\pi(x)$ denotes the action selected by $\pi$ on context $x$.
We call the pair $z=(x,a)$ an \emph{impression}, write $\mathcal{Z} = \{(x,a) : x \in \mathcal{X},\, a \in \mathcal{A}(x)\}$ for the \emph{impression space}, and let $r^\star(z)$ denote the unknown expected reward (e.g., click probability). We assume rewards are bounded: $r^\star(z) \in [0, R_{\max}]$ for all $z \in \mathcal{Z}$ and we take $R_{\max} = 1$ throughout, as our rewards are click probabilities in CTR prediction. Each policy $\pi_j$ induces a distribution $P_{\pi_j}$ on $\mathcal{Z}$; we write $\mathrm{supp}(P_{\pi_j})$ for the set of impressions that $\pi_j$ produces with positive density. We observe $n$ logged impressions
\begin{equation}
\mathcal{D}=\{(z_i,r_i)\}_{i=1}^n, \qquad z_i=(x_i,a_i),
\label{eq:logged_data}
\end{equation}
collected under a production \emph{logging policy}~$\pi_0$, where $r_i$ is the realized reward for impression~$z_i$.
Let $P_X$ denote the marginal distribution over contexts in~$\mathcal{D}$.
}

\par{
We wish to evaluate $N$ candidate policies $\pi_1,\dots,\pi_N$ using only the logged data~$\mathcal{D}$.
Each candidate policy~$\pi_j$ may select a different action than $\pi_0$ on the same context, inducing a different distribution over impressions $P_{\pi_j}$.
The \emph{value} of policy~$\pi_j$ is the expected reward under its induced impressions,
\begin{equation}
V_j \triangleq \mathbb{E}_{x\sim P_X}\!\left[r^\star\!\big(x,\pi_j(x)\big)\right].
\label{eq:policy_value}
\end{equation}
In particular, $V_0$ is the value of the current production (logging) policy.
Our analysis assumes that each candidate policy can be replayed on the logged contexts and that the reward distribution for a given impression is unchanged between the logging and candidate-policy distributions.
In practice, only a small number of candidate policies can be validated through online A/B testing due to limited experimentation capacity, making it critical to select the best policy among $N$ candidates using offline estimates.
Our goal is therefore not only to estimate $V_j$ accurately, but to do so well enough to reliably rank the candidates and minimize the regret of selecting a suboptimal policy for deployment in these limited A/B testing slots.
}

\subsection{Direct Method based Approach}
\label{sec:direct_method}
\par{
Deterministic logging, missing propensities, and unobserved multi-stage system transformations make propensity-based estimators (e.g., IPS and its variants) difficult to apply reliably. We therefore resort to the \emph{Direct Method (DM)}~\citep{dudik2011doublyrobust,swaminathan2015counterfactual,bayir2019genie}: for each candidate~$\pi_j$, we learn a reward model~$\hat r$ from logged data, apply $\pi_j$ to each logged context~$x$, and estimate the value as
\begin{equation}
\hat V_j
\;\triangleq\;
\mathbb{E}_{x\sim P_X}\!\left[\hat r\!\big(x,\,\pi_j(x)\big)\right].
\label{eq:dm_estimator}
\end{equation}
In practice, over a held-out evaluation split $\mathcal{D}_{\mathrm{eval}}\subset\mathcal{D}$ of size $m=|\mathcal{D}_{\mathrm{eval}}|$, we use the empirical replay mean $\widehat V_j^{\mathrm{DM}}=m^{-1}\sum_{(z_i,r_i)\in\mathcal{D}_{\mathrm{eval}}}\widehat r(x_i,\pi_j(x_i))$, which is the estimator used in our experiments.
The key limitation of DM is \emph{policy-induced distribution shift}: $\hat r$ is trained on impressions from~$P_{\pi_0}$ but evaluated on impressions from~$P_{\pi_j}$, which can differ substantially~\citep{shimodaira2000covariate,sugiyama2012density}. When $\hat r$ is inaccurate where $P_{\pi_j}$ places mass but $P_{\pi_0}$ does not, $\hat V_j$ is biased; if this bias differs across candidates, it can distort their ranking and lead to selecting a suboptimal policy.
}

\subsection{Mixture-Weighted Reward Model Training}
\label{sec:mixture_weighted}

\par{
To correct the policy-induced distribution shift between the logging data and the candidate policies, we replay all $N$ candidates on the logged contexts and train a \emph{single shared} reward model whose loss is weighted to target a \emph{mixture} of the candidate-policy distributions, rather than the logging distribution alone. The target mixture distribution over impressions is
\begin{equation}
P_{\mathrm{mix}}
\;\triangleq\;
\sum_{j=1}^{N}
\beta_j\,P_{\pi_j},
\qquad
\beta_j>0,\quad \sum_{j=1}^{N}\beta_j=1,
\label{eq:pmix_def}
\end{equation}
where the coefficients $\beta_1,\dots,\beta_N$ encode how strongly each candidate distribution is emphasized during training. Under no prior information about which candidates are more important, the natural default is uniform weighting: $\beta_j = 1/N$, which we use throughout. Let $p_0(z)$ and $p_j(z)$ denote the density functions of $P_{\pi_0}$ and $P_{\pi_j}$, respectively, and define the mixture density $p_{\mathrm{mix}}(z) \triangleq \sum_{j=1}^{N} \beta_j\,p_j(z)$.
}

\par{
On the region where $p_0(z)>0$, define the per-impression \emph{mixture-to-logging density ratio} and the \emph{per-policy density ratios}:
\begin{equation}
\Omega_j(z) \;\triangleq\; \frac{p_j(z)}{p_0(z)},
\qquad
\Omega(z) \;\triangleq\; \frac{p_{\mathrm{mix}}(z)}{p_0(z)} \;=\; \sum_{j=1}^{N} \beta_j\,\Omega_j(z).
\label{eq:mixture_weight}
\end{equation}
We call $\Omega_j(z)$ the \emph{true per-policy density ratio} of candidate $\pi_j$ to the logging policy, and $\Omega(z)$ the mixture ratio, which serves as a per-impression importance weight: it up-weights logged impressions that are likely under the candidate policies and down-weights those that are only relevant to the logging policy.
}

\par{
Let $\mathcal F$ denote the reward-model class, let $f\in\mathcal F$ be a candidate reward predictor, and let $\ell$ be a per-sample loss with realized reward $Y$. Using the density ratio~$\Omega(z)$, define the \emph{mixture-weighted} risk
\begin{equation}
\mathcal{L}_{\Omega}(f)
\;\triangleq\;
\mathbb{E}_{Z\sim P_{\pi_0}}\!\left[\Omega(Z)\,\ell\!\big(f(Z),\,Y\big)\right]
\;=\;
\mathbb{E}_{Z\sim P_{\mathrm{mix}}}\!\left[\ell\!\big(f(Z),\,Y\big)\,\mathbf 1\{p_0(Z)>0\}\right].
\label{eq:weighted_loss}
\end{equation}
The learned reward model is denoted by $\widehat r\in\mathcal F$. Equation~\eqref{eq:weighted_loss} targets the candidate-mixture risk restricted to the support of the logging distribution.
In practice, $\Omega(z)$ is unknown and must be estimated from data, as described in Section~\ref{sec:domain_classifier}.
}

\subsection{Estimating Density Ratios via Domain Classification}
\label{sec:domain_classifier}

\par{
The per-policy density ratios $p_j(z)/p_0(z)$ in~\eqref{eq:mixture_weight} cannot be computed directly, since neither the logging nor the candidate distributions have known density functions. We estimate them by training a multiclass domain classifier that learns to distinguish which policy generated a given impression~\citep{sugiyama2012density}. We construct a \emph{replay dataset} in which the logging class is represented by the actual logged impressions and each candidate class by replayed counterfactual actions on the same contexts. Counterfactual candidate actions $\pi_j(x_i)$ are produced by shadow-mode scoring in the production experiment and by offline replay of the candidate's scoring function on the logged contexts in simulation. Each impression is paired with a label $g \in \{0,1,\dots,N\}$ indicating its generating policy ($g{=}0$ for the logging policy, $g{=}j$ for candidate~$\pi_j$). We then train a domain classifier $q_\theta(g\mid z)$ to predict the policy label from the impression, using any architecture that outputs a probability distribution over the $N{+}1$ labels, with standard multiclass cross-entropy and uniform class priors.
}

\begin{definition}[Overlap and cold-start regions for policy $\pi_j$]
\label{def:overlap_coldstart}
For each policy class $g\in\{0,\ldots,N\}$, let $p_g(z)$ denote the density of $P_{\pi_g}$. Under equal class priors, let $q^\star(g\mid z)$ denote the Bayes, or population, domain posterior:
\[
q^\star(g\mid z)
\;\triangleq\;
\frac{p_g(z)}{\sum_{k=0}^{N} p_k(z)}.
\]
This is the true probability that impression $z$ was generated by policy class $g$ in the balanced domain-classification problem. Fix a threshold $\tau\in(0,1)$. For each candidate policy $\pi_j$, define
\begin{equation}
\begin{aligned}
\mathcal{O}_j(\tau)
&\;\triangleq\; \{\,z \in \mathrm{supp}(P_{\pi_j}) : q^\star(0\mid z) \ge \tau\,\}, \\
\mathcal{C}_j(\tau)
&\;\triangleq\; \{\,z \in \mathrm{supp}(P_{\pi_j}) : q^\star(0\mid z) < \tau\,\}.
\end{aligned}
\label{eq:overlap_coldstart_regions}
\end{equation}
and define the cold-start mass as $\mu_j(\tau)\triangleq P_{\pi_j}(\mathcal{C}_j(\tau))$. The learned domain classifier $q_\theta(g\mid z)$ provides plug-in estimates of these population quantities.
\end{definition}

\par{
Under equal class priors, the standard density-ratio identity~\citep{sugiyama2012density} gives
\begin{equation}
\frac{p_j(z)}{p_0(z)}
\;=\;
\frac{q^\star(j\mid z)}{q^\star(0\mid z)},
\qquad j=1,\dots,N.
\label{eq:domain_ratio_identity}
\end{equation}
Replacing the population posterior $q^\star$ with the learned classifier $q_\theta$ gives the estimated per-policy ratios below; substituting these estimates into~\eqref{eq:mixture_weight} gives the estimated mixture weight:
\begin{equation}
\hat\Omega_j(z) \;\triangleq\; \frac{q_\theta(j\mid z)}{q_\theta(0\mid z)},
\qquad
\hat\Omega(z) \;\triangleq\; \sum_{j=1}^{N} \beta_j\,\hat\Omega_j(z),
\end{equation}
where $\hat\Omega_j(z)$ is the \emph{estimated per-policy density ratio} of candidate $\pi_j$ to the logging policy.
}

\par{
The estimated weights $\hat\Omega(z)$ can have heavy tails when $q_\theta(0\mid z)$ is small, i.e., when the classifier is confident that an impression was \emph{not} generated by the logging policy.
To prevent a few extreme weights from dominating the training loss, we clip $\hat\Omega(z)$ to a maximum value~$c$:
\begin{equation}
\tilde\Omega(z) \;=\; \min\!\big(\hat\Omega(z),\; c\big).
\label{eq:omega_clip}
\end{equation}
We monitor the \emph{effective sample size}~\citep{kong1994sequential} (ESS) as a diagnostic for weight concentration:
\begin{equation}
\mathrm{ESS}
\;=\;
\frac{\big(\sum_{i=1}^n \tilde\Omega(z_i)\big)^2}{\sum_{i=1}^n \tilde\Omega(z_i)^2}.
\label{eq:ess}
\end{equation}
In our experiments, we report $\mathrm{ESS}/n$, which lies in $[1/n,1]$.
When $\mathrm{ESS}/n \approx 1$, the weights are roughly uniform and the logging and candidate distributions overlap well.
When $\mathrm{ESS}/n \ll 1$, a few impressions dominate the weighted loss, signaling poor overlap and potential instability in the value estimates.
The reliability of the estimated weights, and therefore of ESS as an overlap diagnostic, depends on whether the impression representation preserves policy-induced differences and whether the domain classifier produces reliable probabilities. We return to this issue empirically in Appendix~\ref{sec:dc_reliability_experiment}.
}

\subsection{Value Estimation Error Decomposition}
\label{sec:error_decomposition}

\par{
To understand when mixture-weighted training improves policy ranking and when it cannot, we bound the value estimation error by the sum of three terms: an overlap error, a cold-start term, and a replay concentration term.
Definition~\ref{def:overlap_coldstart} partitions the impressions produced by each candidate policy according to whether the logging policy provides high or low density there, with $\tau$ controlling a trade-off between the resulting error terms.
}

\begin{assumption}
\label{ass:consistency}
\renewcommand{\theenumi}{A\arabic{enumi}}
\renewcommand{\labelenumi}{(\theenumi)}
\begin{enumerate}
\setlength{\itemsep}{0pt}
\item\label{ass:A1} Rewards and predictions are bounded, $r^\star(z),\widehat r(z)\in[0,R_{\max}]$, and the training loss is bounded, $\ell\in[0,L_{\max}]$. For cross-entropy, reward probabilities are clipped to $[\varepsilon,1-\varepsilon]$, so $L_{\max}=-\log\varepsilon$; our runs use $\varepsilon=10^{-6}$.
\item\label{ass:A2} The loss is calibrated (proper). Its conditional excess risk is
\[
\Delta_\ell(f;z)
\triangleq
\mathbb E\!\left[
\ell(f(z),Y)-\ell(r^\star(z),Y)
\mid Z=z
\right],
\]
where $Y$ is the realized reward, and it satisfies
\[
|f(z)-r^\star(z)|^2
\le
c_\ell\Delta_\ell(f;z),
\]
with $c_\ell=1$ for squared loss and $c_\ell=1/2$ for log loss.
\item\label{ass:A3} Let $w(z)\equiv\Omega(z)$ denote the true mixture-to-logging density-ratio weight, let $\widehat w(z)\equiv\widehat\Omega(z)$ denote its unclipped domain-classifier estimate, and define $\widetilde w(z)\equiv\widetilde\Omega(z)=\min\{\widehat w(z),W\}$, where $W=c$ is the clipping threshold from~\eqref{eq:omega_clip}. Conditional on a separate domain-classifier fold, the clipped estimated weight $\widetilde w$ is fixed and satisfies
\[
0\le\widetilde w(z)\le W.
\]
Let $\mathcal D_{\mathrm{train}}\subset\mathcal D$ denote the reward-model training fold and let $n_{\mathrm{tr}}\triangleq|\mathcal D_{\mathrm{train}}|$ denote its sample size. Define its clipped weighted empirical risk by
\[
\widehat R_{\widetilde w}(f)
\triangleq
\frac{1}{n_{\mathrm{tr}}}
\sum_{(Z_i,Y_i)\in\mathcal D_{\mathrm{train}}}
\widetilde w(Z_i)\,\ell(f(Z_i),Y_i).
\]
The reward learner returns $\widehat r\in\mathcal F$ satisfying
\[
\widehat R_{\widetilde w}(\widehat r)
\le
\inf_{f\in\mathcal F}
\widehat R_{\widetilde w}(f)
+
\varepsilon_{\mathrm{opt}}.
\]
Here $\varepsilon_{\mathrm{opt}}\ge 0$ is the optimization error: the amount by which the clipped weighted empirical risk of $\widehat r$ exceeds its infimum over $\mathcal F$.
\item\label{ass:A4} The replay contexts $X_1,\ldots,X_m$ are i.i.d. draws from $P_X$ and are independent of the reward-model and domain-classifier training folds, and hence of the fitted reward model and estimated weights.
\end{enumerate}
\end{assumption}

\begin{theorem}[Replay-DM value error for a reward model trained with sample-split estimated mixture weights]
\label{thm:error_decomp}
Write the candidate-overlap error
\begin{equation}
\mathcal E_j^{\mathrm{ov}}(\tau)
\triangleq
\mathbb E_{P_{\pi_j}}
\left[
|\widehat r-r^\star|\,
\mathbf 1\{Z\in\mathcal O_j(\tau)\}
\right].
\label{eq:candidate_overlap_error}
\end{equation}
Under Assumption~\ref{ass:consistency}, for any $\delta_{\mathrm{tr}},\delta_{\mathrm{ev}}\in(0,1)$, with $\delta_{\mathrm{tr}}$ and $\delta_{\mathrm{ev}}$ denoting the failure probabilities of the training and replay-evaluation events, respectively, the following holds with probability at least $1-\delta_{\mathrm{tr}}-\delta_{\mathrm{ev}}$ over the joint training and replay samples, simultaneously for all candidates $j$:
\begin{equation}
\left|\widehat V_j^{\mathrm{DM}}-V_j\right|
\le
\mathcal E_j^{\mathrm{ov}}(\tau)
+
R_{\max}\mu_j(\tau)
+
R_{\max}
\sqrt{
\frac{\log(2N/\delta_{\mathrm{ev}})}{2m}
}.
\label{eq:error_bound_finite}
\end{equation}
On the same event, the overlap error admits two valid upper bounds, a change-of-measure bound and, under mixture-weighted training and the shared reward law of Section~\ref{sec:problem_setup}, a decomposition bound:
\begin{equation}
\mathcal E_j^{\mathrm{ov}}(\tau)
\le
\frac{1}{\tau}\epsilon_{0,j}(\widehat r;\tau)
\quad\text{(i)},
\qquad
\mathcal E_j^{\mathrm{ov}}(\tau)
\le
\sqrt{
\frac{c_\ell}{\beta_j}
\left(
A_{\mathcal F}
+\varepsilon_{\mathrm{opt}}
+2\Gamma_{n_{\mathrm{tr}}}(\delta_{\mathrm{tr}})
+2L_{\max}\varepsilon_w
\right)
}
\quad\text{(ii)}.
\label{eq:overlap_bounds}
\end{equation}
Thus $\mathcal E_j^{\mathrm{ov}}(\tau)\le\min\{\text{(i)},\text{(ii)}\}$, and neither bound dominates the other. Form (i) is a change of measure to the logging distribution that exposes the $1/\tau$ factor; form (ii) is the mixture-weighted training decomposition into model-fit, finite-sample, and weight terms.
\end{theorem}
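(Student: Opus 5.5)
We split the error into a population part and a replay-sampling part. Write $\widehat V_j^{\mathrm{DM}}-V_j=(\widehat V_j^{\mathrm{DM}}-\bar V_j)+(\bar V_j-V_j)$, where $\bar V_j\triangleq\mathbb E_{x\sim P_X}[\widehat r(x,\pi_j(x))]$ is the population replay value of the fitted model.

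\textbf{Replay term.} Condition on the training and domain-classifier folds. By (A4) the summands $\widehat r(X_i,\pi_j(X_i))$ are then i.i.d. and lie in $[0,R_{\max}]$ by (A1). Hoeffding gives $|\widehat V_j^{\mathrm{DM}}-\bar V_j|\le R_{\max}\sqrt{\log(2N/\delta_{\mathrm{ev}})/(2m)}$ with probability $1-\delta_{\mathrm{ev}}/N$ for each $j$. A union bound over the $N$ candidates makes this hold simultaneously for all $j$, and integrating over the conditioning keeps the probability unconditional.

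\textbf{Population term.} Since $\pi_j$ is deterministic, $\bar V_j-V_j=\mathbb E_{P_{\pi_j}}[\widehat r-r^\star]$. Split the support of $P_{\pi_j}$ into $\mathcal O_j(\tau)\cup\mathcal C_j(\tau)$. On $\mathcal O_j$, bounding the absolute value gives exactly $\mathcal E_j^{\mathrm{ov}}(\tau)$. On $\mathcal C_j$, both $\widehat r$ and $r^\star$ lie in $[0,R_{\max}]$, so $|\widehat r-r^\star|\le R_{\max}$ and the contribution is at most $R_{\max}\mu_j(\tau)$. Combining the two parts yields \eqref{eq:error_bound_finite}; only $\delta_{\mathrm{ev}}$ is consumed so far, and $\delta_{\mathrm{tr}}$ is reserved for (ii).

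\textbf{Bound (i).} On $\mathcal O_j(\tau)$ we have $q^\star(0\mid z)\ge\tau$, so $p_0(z)\ge\tau\sum_k p_k(z)\ge\tau p_j(z)$, i.e.\ $p_j/p_0\le 1/\tau$. Changing measure gives $\mathcal E_j^{\mathrm{ov}}\le\tau^{-1}\mathbb E_{P_{\pi_0}}[|\widehat r-r^\star|\mathbf 1\{\mathcal O_j\}]$. I take $\epsilon_{0,j}(\widehat r;\tau)$ to be this logging-restricted error, by its presumable definition.

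\textbf{Bound (ii).} This is the main obstacle.
\begin{enumerate}
\item Apply Jensen and $P_{\pi_j}\le\beta_j^{-1}P_{\mathrm{mix}}$ to get
\[
\mathcal E_j^{\mathrm{ov}}\le\sqrt{\mathbb E_{P_{\pi_j}}[|\widehat r-r^\star|^2\mathbf 1\{\mathcal O_j\}]}\le\sqrt{\beta_j^{-1}\mathbb E_{P_{\mathrm{mix}}}[|\widehat r-r^\star|^2\mathbf 1\{p_0>0\}]}.
\]
This uses $\mathcal O_j\subset\{p_0>0\}$, which holds since $\tau>0$.
\item Calibration (A2) together with \eqref{eq:weighted_loss} and the shared reward law bounds the inner expectation by $c_\ell(\mathcal L_\Omega(\widehat r)-\mathcal L_\Omega(r^\star))$.
\item Decompose this excess weighted risk by passing from the true weights $w$ to the clipped estimate $\widetilde w$, with population risk $\mathcal L_{\widetilde w}$. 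Each swap $\mathcal L_\Omega\leftrightarrow\mathcal L_{\widetilde w}$ costs at most $L_{\max}\mathbb E_{P_0}|w-\widetilde w|\triangleq L_{\max}\varepsilon_w$. This is applied twice, at $\widehat r$ and at the comparator, giving $2L_{\max}\varepsilon_w$.
\item Next, $\mathcal L_{\widetilde w}(\widehat r)\le\widehat R_{\widetilde w}(\widehat r)+\Gamma$ and $\widehat R_{\widetilde w}(f)\le\mathcal L_{\widetilde w}(f)+\Gamma$ for all $f$ hold on a uniform-deviation event of probability $1-\delta_{\mathrm{tr}}$. Here $\Gamma_{n_{\mathrm{tr}}}(\delta_{\mathrm{tr}})$ is, e.g., a Rademacher bound for the loss class scaled by $WL_{\max}$, valid because $\widetilde w$ is fixed given its fold by (A3). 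This step gives $2\Gamma$.
\item By near-ERM, the fitted model comes within $\varepsilon_{\mathrm{opt}}$ of $\inf_f\widehat R_{\widetilde w}(f)$.
\item The remaining gap $\inf_f\mathcal L_\Omega(f)-\mathcal L_\Omega(r^\star)$ is $A_{\mathcal F}$.
\end{enumerate}

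The delicate points are bookkeeping. First, the comparator must be a near-minimizer $f^\dagger$ of $\mathcal L_\Omega$, so that the approximation term is exactly $A_{\mathcal F}$ (handled by an arbitrarily small slack). Second, the weight-error term $\varepsilon_w$ must absorb the clipping bias as well as the classifier error. I would define $\varepsilon_w$ as $\mathbb E_{P_0}|w-\widetilde w|$ to make both swaps valid with the stated constant $2L_{\max}$.

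Finally, intersect the training event with the evaluation event by a union bound. This gives total probability $1-\delta_{\mathrm{tr}}-\delta_{\mathrm{ev}}$, simultaneously for all $j$, since bound (ii) holds uniformly in $j$ on the single training event.
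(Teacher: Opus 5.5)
Your proposal is correct and follows essentially the same route as the paper's proof: Hoeffding with a union bound over the $N$ candidates for the replay term, the overlap/cold-start split of the population bias, the $p_j/p_0\le 1/\tau$ change of measure for (i), and for (ii) Cauchy--Schwarz (your Jensen step) with $p_{\mathrm{mix}}\ge\beta_j p_j$, then calibration, then the same five-term telescope through the clipped-weight population and empirical risks. The only difference is cosmetic. You use a near-minimizer comparator with vanishing slack, whereas the paper assumes $f^\star_{\mathcal F}$ attains the minimum of $R_w$; this is harmless because the uniform-deviation event does not depend on the comparator.
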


\par{
The terms are as follows. Let $w(z)\equiv\Omega(z)$ denote the true mixture-to-logging density-ratio weight, let $\widehat w(z)\equiv\widehat\Omega(z)$ denote its unclipped domain-classifier estimate, and define $\widetilde w(z)\equiv\widetilde\Omega(z)=\min\{\widehat w(z),W\}$, where $W=c$ is the clipping threshold from~\eqref{eq:omega_clip}. Define the corresponding population risks
\begin{equation}
R_w(f)
\triangleq
\mathbb E_{P_{\pi_0}}\!\left[w(Z)\ell(f(Z),Y)\right],
\qquad
R_{\widetilde w}(f)
\triangleq
\mathbb E_{P_{\pi_0}}\!\left[\widetilde w(Z)\ell(f(Z),Y)\right].
\label{eq:population_weighted_risks}
\end{equation}
Let
\begin{equation}
f^\star_{\mathcal F}
\in
\argmin_{f\in\mathcal F}R_w(f)
\label{eq:best_in_class_predictor}
\end{equation}
denote the best predictor in the model class under the true weighted population risk. Define the logging-distribution overlap error by
\begin{equation}
\epsilon_{0,j}(\widehat r;\tau)
\triangleq
\mathbb E_{P_{\pi_0}}
\!\left[
|\widehat r-r^\star|\mathbf 1\{Z\in\mathcal O_j(\tau)\}
\right].
\label{eq:logging_overlap_error}
\end{equation}
Here, $\mathcal O_j(\tau)$ is the overlap region defined in Equation~\eqref{eq:overlap_coldstart_regions}. $A_{\mathcal F}=R_w(f^\star_{\mathcal F})-R_w(r^\star)$ is model-class approximation error (best-in-class weighted risk minus the Bayes risk). $\varepsilon_{\mathrm{opt}}$ is the optimization error defined in Assumption~\ref{ass:A3}. $\Gamma_{n_{\mathrm{tr}}}(\delta_{\mathrm{tr}})$ is a high-probability ($1-\delta_{\mathrm{tr}}$) bound on the weighted uniform deviation $\sup_{f\in\mathcal F}|R_{\widetilde w}(f)-\widehat R_{\widetilde w}(f)|$ (generalization) for the clipped-weighted loss class. Define
\begin{equation}
\varepsilon_{\mathrm{DC}}
\triangleq
\mathbb E_{P_{\pi_0}}\!\left[|\widehat w(Z)-w(Z)|\right],
\qquad
\varepsilon_{\mathrm{clip}}
\triangleq
\mathbb E_{P_{\pi_0}}\!\left[(w(Z)-W)_+\right],
\label{eq:dc_and_clipping_errors}
\end{equation}
and
\begin{equation}
\varepsilon_w
\triangleq
\mathbb E_{P_{\pi_0}}\!\left[|\widetilde w(Z)-w(Z)|\right]
\le
2\varepsilon_{\mathrm{DC}}+\varepsilon_{\mathrm{clip}}.
\label{eq:weight_error_bound}
\end{equation}
Thus $\varepsilon_w$ combines domain-classifier ratio-estimation error and clipping bias. $\mu_j(\tau)=P_{\pi_j}(\mathcal C_j(\tau))$ is the cold-start mass on the region $\mathcal C_j(\tau)$ defined in Equation~\eqref{eq:overlap_coldstart_regions}; the computable $q_\theta$ plug-in requires a classifier calibration or error correction. Finally, $\beta_j$ is the candidate mixture share, $P_{\mathrm{mix}}=\sum_j\beta_jP_{\pi_j}$; $c_\ell$ is the calibration constant ($1$ for squared loss and $1/2$ for log loss); and $m$ is the number of replay contexts.
}

\par{
The full proof is given in Appendix~\ref{app:error_analysis}.
}

\begin{remark}[$\tau$ trade-off and finite-sample selection]
\label{rem:tau_tradeoff}
Selecting $\tau$ is a trade-off. Decreasing $\tau$ lowers $\mu_j(\tau)$ but inflates the $1/\tau$ factor; increasing $\tau$ does the reverse. Finite-sample selection should minimize the worst-candidate full bound,
\begin{equation}
\tau^\star
\in
\argmin_{0<\tau<1}
\max_j
\left[
\min\left\{
\frac{\epsilon_{0,j}(\widehat r;\tau)}{\tau},
B_j
\right\}
+
R_{\max}\mu_j(\tau)
\right],
\label{eq:tau_selection}
\end{equation}
where $B_j$ is the $\tau$-independent mixture bound in form (ii) of Equation~\eqref{eq:overlap_bounds}, using estimable upper bounds/proxies since both branches depend on $r^\star$.
\end{remark}


\begin{corollary}[Top-1 selection regret]
\label{cor:regret}
Let $j^\star = \arg\max_{j} V_j$ denote the true best policy and
$\hat\jmath = \arg\max_{j} \widehat V_j^{\mathrm{DM}}$ the policy selected by offline evaluation.
Let $\epsilon_j = |\widehat V_j^{\mathrm{DM}} - V_j|$ denote the value estimation error for policy~$\pi_j$.
Then the simple regret of selecting $\hat\jmath$ satisfies
\begin{equation}
\mathrm{Regret}
\;\triangleq\;
V_{j^\star} - V_{\hat\jmath}
\;\le\;
2\,\max_{j=1,\dots,N} \epsilon_j,
\label{eq:regret_bound_simple}
\end{equation}
which follows directly from the triangle inequality applied to $\widehat V_{\hat\jmath}^{\mathrm{DM}} \ge \widehat V_{j^\star}^{\mathrm{DM}}$.
\end{corollary}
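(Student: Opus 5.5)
The argument is the standard sandwich for simple regret. No property of the estimator is used beyond the definition of $\hat\jmath$ as an argmax of the offline estimates. First, I would add and subtract the estimated values of the two relevant policies:
\[
V_{j^\star}-V_{\hat\jmath}
=\bigl(V_{j^\star}-\widehat V_{j^\star}^{\mathrm{DM}}\bigr)
+\bigl(\widehat V_{j^\star}^{\mathrm{DM}}-\widehat V_{\hat\jmath}^{\mathrm{DM}}\bigr)
+\bigl(\widehat V_{\hat\jmath}^{\mathrm{DM}}-V_{\hat\jmath}\bigr).
\]

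I would then bound each of the three terms. The middle term is nonpositive, since $\hat\jmath$ maximizes $\widehat V_j^{\mathrm{DM}}$ over $j$, so $\widehat V_{\hat\jmath}^{\mathrm{DM}}\ge\widehat V_{j^\star}^{\mathrm{DM}}$. The first term is at most $|V_{j^\star}-\widehat V_{j^\star}^{\mathrm{DM}}|=\epsilon_{j^\star}$, and the third is at most $\epsilon_{\hat\jmath}$. Hence
\[
\mathrm{Regret}\le\epsilon_{j^\star}+\epsilon_{\hat\jmath}\le 2\max_{j}\epsilon_j .
\]
The regret is also nonnegative by the definition of $j^\star$, although that fact is not needed for the bound.

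There is no real obstacle. The only points worth stating are the following. Ties in either argmax may be broken arbitrarily, and the inequality holds for any such choice. The index range must include every policy over which $j^\star$ and $\hat\jmath$ are taken, namely the candidates $1,\dots,N$. The bound is deterministic, so intersecting with the high-probability event of Theorem~\ref{thm:error_decomp} yields a regret bound of twice the worst-candidate right-hand side of \eqref{eq:error_bound_finite}, holding with probability at least $1-\delta_{\mathrm{tr}}-\delta_{\mathrm{ev}}$. This is possible because that theorem holds simultaneously for all $j$.
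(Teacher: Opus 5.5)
Your proof is correct and follows the paper's own argument: the same three-term add-and-subtract decomposition, dropping the middle term as nonpositive because $\hat\jmath$ maximizes the offline estimates, and bounding the outer two terms by $\epsilon_{j^\star}+\epsilon_{\hat\jmath}\le 2\max_j\epsilon_j$. Your closing remark, intersecting this deterministic bound with the simultaneous high-probability event of Theorem~\ref{thm:error_decomp}, is exactly how the paper uses the corollary in the proof of Corollary~\ref{cor:decision_rule}.
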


\par{
Corollary~\ref{cor:regret} shows that the regret of offline policy selection is controlled by the worst-case value estimation error across candidate policies. In particular, uniformly accurate value estimates guarantee reliable policy selection; large errors for either the optimal policy or the selected policy can directly translate into suboptimal deployment decisions.
The proof is a straightforward triangle-inequality argument and is deferred to Appendix~\ref{app:regret_proof}.
}

\begin{corollary}[Asymptotic offline-vs-A/B routing rule]
\label{cor:decision_rule}
Fix $N$ and $\tau$, and let $n_{\mathrm{tr}},m\to\infty$. Choose sequences $\delta_{\mathrm{tr}}(n_{\mathrm{tr}}),\delta_{\mathrm{ev}}(m)\in(0,1)$ such that $\delta_{\mathrm{tr}}(n_{\mathrm{tr}})\to0$, $\delta_{\mathrm{ev}}(m)\to0$, $\Gamma_{n_{\mathrm{tr}}}(\delta_{\mathrm{tr}}(n_{\mathrm{tr}}))\to0$, and $\log(2N/\delta_{\mathrm{ev}}(m))/m\to0$. Suppose also that $A_{\mathcal F}\to0$, $\varepsilon_{\mathrm{opt}}\to0$, and $\varepsilon_w\to0$, with convergence in probability for random terms. Then, for every $\eta>0$,
\begin{equation}
\Pr\!\left(
\mathrm{Regret}
>
2\,R_{\max}\,\max_j\mu_j(\tau)+\eta
\right)
\to 0.
\label{eq:asymptotic_regret_probability}
\end{equation}
Consequently, for a regret budget $\Delta>0$, if
\begin{equation}
2\,R_{\max}\,\max_j\mu_j(\tau)
\;\le\;
\Delta,
\label{eq:decision_rule}
\end{equation}
then $\Pr(\mathrm{Regret}>\Delta+\eta)\to0$ for every $\eta>0$. In this asymptotic regime, the routing rule trusts the offline-selected policy when~\eqref{eq:decision_rule} holds and routes the candidate set to A/B testing otherwise.
\end{corollary}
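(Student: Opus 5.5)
The plan combines Corollary~\ref{cor:regret} with the high-probability bound of Theorem~\ref{thm:error_decomp}, using form (ii) of~\eqref{eq:overlap_bounds} for the overlap error, and then shows that every term except the cold-start term vanishes in probability. Let $E_{n_{\mathrm{tr}},m}$ be the event of Theorem~\ref{thm:error_decomp} with failure probabilities $\delta_{\mathrm{tr}}(n_{\mathrm{tr}})$ and $\delta_{\mathrm{ev}}(m)$. It has probability at least $1-\delta_{\mathrm{tr}}(n_{\mathrm{tr}})-\delta_{\mathrm{ev}}(m)\to 1$. On this event, simultaneously for all $j$,
\[
\epsilon_j \le B_j + R_{\max}\mu_j(\tau) + R_{\max}\sqrt{\log(2N/\delta_{\mathrm{ev}}(m))/(2m)},
\qquad
B_j=\sqrt{\tfrac{c_\ell}{\beta_j}\big(A_{\mathcal F}+\varepsilon_{\mathrm{opt}}+2\Gamma_{n_{\mathrm{tr}}}(\delta_{\mathrm{tr}})+2L_{\max}\varepsilon_w\big)}.
\]
Corollary~\ref{cor:regret} then gives $\mathrm{Regret}\le 2R_{\max}\max_j\mu_j(\tau)+T_{n_{\mathrm{tr}},m}$ on this event, where
\[
T_{n_{\mathrm{tr}},m}\triangleq 2\max_j B_j+2R_{\max}\sqrt{\log(2N/\delta_{\mathrm{ev}}(m))/(2m)}.
\]

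The second step is to show $T_{n_{\mathrm{tr}},m}\to 0$ in probability. The replay term is deterministic and tends to zero by hypothesis. For $B_j$, note that $N$ and the $\beta_j>0$ are fixed, so $c_\ell/\beta_j\le c_\ell/\min_k\beta_k<\infty$. The bracketed quantity is a finite sum of nonnegative terms, each tending to zero, some in probability ($\varepsilon_{\mathrm{opt}}$, $\varepsilon_w$) and some deterministically. Its sum therefore tends to zero in probability, and the continuous mapping theorem applied to $\sqrt{\cdot}$ gives $B_j\to 0$ in probability. The maximum over finitely many $j$ preserves this. The cold-start term $\mu_j(\tau)$ is a population quantity that depends only on $\tau$ and the policies, so it is fixed along the sequence.

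Then, for $\eta>0$,
\[
\Pr\big(\mathrm{Regret}>2R_{\max}\max_j\mu_j(\tau)+\eta\big)\le \Pr(E^c_{n_{\mathrm{tr}},m})+\Pr(T_{n_{\mathrm{tr}},m}>\eta)\to 0,
\]
which is~\eqref{eq:asymptotic_regret_probability}. The routing consequence is immediate: if~\eqref{eq:decision_rule} holds, then $\{\mathrm{Regret}>\Delta+\eta\}\subseteq\{\mathrm{Regret}>2R_{\max}\max_j\mu_j(\tau)+\eta\}$.

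The main obstacle I expect is a measurability and conditioning subtlety. Theorem~\ref{thm:error_decomp} is stated with $\widetilde w$ fixed conditional on the domain-classifier fold, while $\varepsilon_w$ is random through that fold. I would handle this by applying the theorem conditionally on the classifier fold and then integrating; the conditional failure probability is at most $\delta_{\mathrm{tr}}+\delta_{\mathrm{ev}}$ uniformly, so the unconditional bound still holds. The remaining randomness in $\varepsilon_w$ and $\varepsilon_{\mathrm{opt}}$ is absorbed by the in-probability convergence assumed in the corollary. A related point is that $A_{\mathcal F}\to 0$ implicitly means the model class grows with $n_{\mathrm{tr}}$, so $\Gamma_{n_{\mathrm{tr}}}$ must be understood for the corresponding class; the corollary's hypotheses are taken to include this.
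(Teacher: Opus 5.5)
Your proposal is correct and follows the paper's proof essentially line for line: you invoke Theorem~\ref{thm:error_decomp} with form (ii), apply Corollary~\ref{cor:regret} on the same event to get $\mathrm{Regret}\le 2R_{\max}\max_j\mu_j(\tau)+\rho$ (your $T_{n_{\mathrm{tr}},m}$), show $\rho\to0$ in probability using that $N$ and the $\beta_j$ are fixed, bound the failure probability by $\delta_{\mathrm{tr}}+\delta_{\mathrm{ev}}+\Pr(\rho>\eta)$, and finish with the set inclusion. Your remarks on conditioning on the domain-classifier fold and on the model class growing with $n_{\mathrm{tr}}$ spell out points the paper leaves implicit; they do not change the argument.
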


\par{
Corollary~\ref{cor:decision_rule} is a population asymptotic routing rule, applicable when the reducible model-fit, weighted-generalization, weight, and replay finite-sample terms in Theorem~\ref{thm:error_decomp} are negligible. Its $q_\theta$-based implementation additionally requires classifier calibration/error control. For finite samples, the routing decision must use the full value-error bound in Theorem~\ref{thm:error_decomp}. The proof of Corollary~\ref{cor:decision_rule} is given in Appendix~\ref{app:decision_rule_proof}.
}

\section{Synthetic Data Experiments}
\label{sec:synthetic_experiments}

\subsection{Simulation Environment and Setup}
\label{sec:sim_setup}

\par{
We developed a custom simulation framework for offline ranking policy evaluation. The simulator maintains a fixed corpus of users and ads, each represented as a feature vector, and an oracle reward function that captures nonlinear user-ad interactions to assign an expected reward to any (user, ad) pair. Each logged impression consists of a randomly drawn user presented with $k$ candidate ads; the acting policy scores each candidate and selects the highest-scoring ad, and the oracle evaluates the resulting (user, ad) pair to produce a reward. We use $10{,}000$ users, $1{,}000$ ads, and $k=10$ candidates per impression, with $70{,}000$ logged impressions per trial split into four disjoint folds: $20{,}000$ for domain-classifier training, $20{,}000$ for reward-model training, $10{,}000$ for calibration, and $20{,}000$ for evaluation. Unless otherwise noted, reward models are trained for $30$ epochs, domain classifiers for $4$ epochs, estimated weights are clipped at $c=100$, and results are aggregated over $50$ seeds. We publicly release the simulator and synthetic evaluation code.\footnote{\url{https://github.com/facebookresearch/ad_ranking_simulation}} The simulator provides full observability: exact ground-truth policy values are obtained by applying each candidate to the held-out evaluation contexts and querying the oracle. Each candidate policy scores (user, ad) pairs the same way as the logging policy but with shifted scoring parameters. A calibration sweep over these shifts places the candidates of a trial at value multipliers $V_j/V_0\in\{0.95,1.00,\dots,1.20\}$ and at a common target ESS of their true density ratios relative to the logging policy, one of $0.25,0.35,\dots,0.85$; if no shift reaches a candidate's value target, that candidate is omitted. Full details of feature distributions, oracle reward function, logging-policy construction, and the calibration sweep are in Appendix~\ref{app:simulator_details}.
}

\subsection{Estimators}
\label{sec:estimators}

\par{
We evaluate six estimators in the synthetic experiments. Let $\{(z_i,r_i)\}_{i=1}^m$ denote the logged impressions in the evaluation split, $\widehat r(z)$ a learned reward model, and $\widehat\Omega_j(z)$ the estimated per-policy density-ratio weight from the multiclass domain classifier of Section~\ref{sec:domain_classifier}. Define its clipped version as $\tilde\Omega_j(z)\triangleq\min\{\widehat\Omega_j(z),c\}$.
(i) \textbf{IPS}: importance-weighted average of logged rewards, $\widehat V_j^{\mathrm{IPS}}=(1/m)\sum_i\tilde\Omega_j(z_i)r_i$. Since true propensities are unavailable, IPS uses the same classifier-based density-ratio estimates as the weighted DM and doubly robust (DR) estimators.
(ii) \textbf{Unweighted DM}: the standard direct method with $\widehat r$ trained on the logging distribution under uniform sample weights and replayed on candidate-policy actions.
(iii) \textbf{Weighted DM (ours)}: the mixture-weighted reward model from Section~\ref{sec:mixture_weighted}, trained with the clipped mixture weight $\tilde\Omega(z)$ from Equation~\eqref{eq:omega_clip} and replayed on candidate-policy actions.
(iv) \textbf{Per-policy Weighted DM}: $N$ separately trained reward models, with model $j$ trained using $\tilde\Omega_j(z)$; it uses the same data splits, architecture, optimizer, epochs, calibration, and clipping settings as Weighted DM.
(v) \textbf{Unweighted DR}: the doubly robust method using the unweighted reward model plus the residual correction $(1/m)\sum_i\tilde\Omega_j(z_i)(r_i-\widehat r(z_i))$.
(vi) \textbf{Weighted DR (ours)}: the same residual correction using the mixture-weighted reward model as the control variate.
}

\subsection{Evaluation Metrics}
\label{sec:metrics}

\par{
In each trial, we evaluate $N$ candidate target policies $\pi_1, \ldots, \pi_N$. Let $V_j \triangleq V(\pi_j)$ denote the oracle value of $\pi_j$, and let $\widehat{V}_j$ denote the policy-value estimate produced by each method. We report the following metrics. \textbf{Top-1 regret (Reg).} Let $\hat{\jmath} \triangleq \arg\max_{j} \widehat V_j$ be the deployed policy choice and $j^\star \triangleq \arg\max_{j} V_j$ the oracle-best. Top-1 regret is $V_{j^\star}-V_{\hat{\jmath}}$ averaged over trials: the value lost by deploying the offline-selected policy instead of the true best. For Per-policy Weighted DM, whose $N$ reward models each produce their own estimate $\widehat V_0^{(j)}$ of the logging-policy value, $\hat\jmath$ maximizes the estimated improvement $\widehat V_j-\widehat V_0^{(j)}$. \textbf{Top-1 accuracy (Acc).} $\mathbb{I}\{\hat{\jmath}=j^\star\}$ averaged over trials: the fraction of trials in which the correct policy is selected. \textbf{Value MAE\% (MAE\%).} $\frac{100}{N}\sum_{j=1}^{N} \left|\widehat V_j - V_j\right|/V_j$: the mean absolute percentage error of the estimated policy values $\widehat V_j$ against the oracle values $V_j$, averaged over the $N$ candidate policies in a trial.
}

\subsection{Experiment 1: Estimator Comparison Under Distribution Shift}
\label{sec:exp1_baseline_domain_shift}

\par{
We compare the six estimators of Section~\ref{sec:estimators} across the seven ESS targets of Section~\ref{sec:sim_setup}. With $50$ seeds per target, this gives $350$ trials, grouped into three regimes by ESS target (Table~\ref{tab:method_comparison_regimes}). The regime headers in Table~\ref{tab:method_comparison_regimes} report the corresponding mean cold-start mass at $\tau=0.1$, linking the empirical ESS grouping to the overlap diagnostic in the value-error bound.}

\par{
Table~\ref{tab:method_comparison_regimes} reports performance across the three ESS regimes. The IPS baseline ranges from $0.60$ to $0.71$ Top-1 accuracy across regimes. Mixture weighting clearly helps: Weighted DM improves over Unweighted DM on eight of nine regime-metric comparisons, and it outperforms Per-policy Weighted DM on all nine, so its gains cannot be explained by candidate-specific reweighting alone. Weighted DR is best on all three metrics in the Low and Mid regimes; in the High regime, Unweighted DR has slightly higher Top-1 accuracy, the two DR estimators tie on regret, and Weighted DR has lower Value MAE\%. Full per-bin breakdowns with standard errors and the per-policy Value MAE\% breakdown are in Appendix~\ref{app:detailed_results}. To test sensitivity to the clip $c$ in Equation~\eqref{eq:omega_clip}, which trades off removing the up-weighting that corrects the shift (small $c$) against letting a few impressions dominate the training loss (large $c$), we repeat the Experiment~1 setup with $c\in\{1,5,10,20,100\}$; the results are in Appendix~\ref{app:exp2_clipping}. At every overlap level, Value MAE\% is nearly the same for $c\in\{10,20,100\}$, a range that includes the $c=100$ used in Experiment~1. In contrast, $c=1$, which removes all up-weighting, stays close to the unweighted model at low overlap and is best only at high overlap.
}

\begin{table*}[t]
\centering
\caption{Comparison of six policy-selection methods across ESS regimes in the $70{,}000$-impression sample-split experiment (50 seeds, up to 6 candidate policies $\times$ 7 ESS targets per seed). Column headers report the mean cold-start mass at $\tau=0.1$ in each regime, estimated by thresholding the learned classifier $q_\theta$ in place of $q^\star$ in Definition~\ref{def:overlap_coldstart}. Classifier-weighted variants and IPS use clip $c=100$. Best per column in \textbf{bold}.}
\label{tab:method_comparison_regimes}
\small
\setlength{\tabcolsep}{3pt}
\renewcommand{\arraystretch}{1.05}
\begin{tabular}{l ccc ccc ccc}
\toprule
& \multicolumn{3}{c}{\shortstack{Low ESS ($<0.4$) \\ $\bar P_{\pi_j}(\mathcal{C}_j(0.1)){\approx}\,43\%$}}
& \multicolumn{3}{c}{\shortstack{Mid ESS ($0.4$--$0.7$) \\ $\bar P_{\pi_j}(\mathcal{C}_j(0.1)){\approx}\,25\%$}}
& \multicolumn{3}{c}{\shortstack{High ESS ($\geq 0.7$) \\ $\bar P_{\pi_j}(\mathcal{C}_j(0.1)){\approx}\,12\%$}} \\
\cmidrule(lr){2-4}\cmidrule(lr){5-7}\cmidrule(lr){8-10}
Method & Acc$\uparrow$ & Reg$\downarrow$ & MAE\%$\downarrow$ & Acc$\uparrow$ & Reg$\downarrow$ & MAE\%$\downarrow$ & Acc$\uparrow$ & Reg$\downarrow$ & MAE\%$\downarrow$ \\
\midrule
IPS                            & 0.60 & 0.0041 & 7.64 & 0.65 & 0.0045 & 5.49 & 0.71 & 0.0028 & 4.15 \\
\midrule
Unweighted DM                  & 0.67 & 0.0044 & 7.20 & 0.87 & 0.0006 & 2.58 & 0.96 & 0.0003 & 1.59 \\
Weighted DM (ours)             & 0.81 & 0.0014 & 2.65 & 0.91 & 0.0004 & 1.94 & 0.98 & \textbf{0.0000} & 1.86 \\
Per-policy Weighted DM         & 0.75 & 0.0028 & 3.61 & 0.83 & 0.0013 & 2.80 & 0.90 & 0.0006 & 2.32 \\
\midrule
Unweighted DR                  & 0.82 & 0.0016 & 2.64 & 0.93 & 0.0002 & 0.70 & \textbf{1.00} & \textbf{0.0000} & 0.30 \\
Weighted DR (ours)             & \textbf{0.90} & \textbf{0.0005} & \textbf{1.71} & \textbf{0.95} & \textbf{0.0001} & \textbf{0.43} & 0.99 & \textbf{0.0000} & \textbf{0.21} \\
\bottomrule
\end{tabular}
\end{table*}

\section{Production Data Experiment}
\label{sec:exp_prod}

\par{
\textbf{Setup.}
We evaluate production traffic from a large-scale industrial CTR ranking system with one production logging policy and five candidate treatment policies. Each of the six policies served a separate share of live production traffic in an online A/B test, so its true value $V_j$ is the observed click-through rate on the evaluation impressions it served. Top-1 accuracy is computed over the five candidates, so random selection gives an accuracy of $1/5=0.20$. Value MAE\% averages over all six policies, including the logging policy. All metrics are averaged over $50$ random subsamples of the production data, with estimated weights clipped at $c=20$. Weighted DM and Per-policy Weighted DM use the same data splits, architectures, optimization, calibration, and clipping settings.
}

\begin{table*}[t]
\centering
\caption{Production experiment on a large-scale industrial CTR ranking system. All metrics are averaged over $50$ random subsamples, with estimated weights clipped at $c=20$. Horizontal rules separate IPS, DM, and DR estimators; bold marks the best value per column within the DM and DR blocks.}
\label{tab:prod_results}
\small
\setlength{\tabcolsep}{3pt}
\renewcommand{\arraystretch}{1.05}
\begin{tabular}{l ccc}
\toprule
Method & Acc$\uparrow$ & Reg$\downarrow$ & MAE\%$\downarrow$ \\
\midrule
IPS & 0.24 & 0.00807 & 4.14 \\
\midrule
Unweighted DM & 0.36 & 0.00271 & 14.10 \\
Weighted DM (ours) & \textbf{0.40} & \textbf{0.00244} & \textbf{11.56} \\
Per-policy Weighted DM & 0.24 & 0.00676 & 12.43 \\
\midrule
Unweighted DR & 0.38 & \textbf{0.00225} & \textbf{1.90} \\
Weighted DR (ours) & \textbf{0.44} & 0.00226 & 2.04 \\
\bottomrule
\end{tabular}
\end{table*}

\par{
\textbf{Results.}
Among the DM estimators in Table~\ref{tab:prod_results}, Weighted DM performs best on all three metrics. Relative to the Per-policy Weighted DM baseline, it improves Top-1 accuracy from $0.24$ to $0.40$, reduces regret from $0.00676$ to $0.00244$, and reduces Value MAE\% from $12.43$ to $11.56$. It also outperforms Unweighted DM, which obtains $0.36$ Top-1 accuracy, $0.00271$ regret, and a Value MAE\% of $14.10$. Per-policy Weighted DM reweights toward each candidate with its own reward model, so the gain of Weighted DM cannot be explained by per-candidate reweighting or by training more reward models.
}

\par{
Among the DR estimators, Weighted DR achieves the highest Top-1 accuracy, while Unweighted DR has marginally lower regret and Value MAE\%, so neither DR variant dominates. IPS and both DR estimators attain much lower Value MAE\% than all DM estimators ($1.90$--$4.14$ versus $11.56$--$14.10$), but IPS has the highest regret ($0.00807$) and ties Per-policy Weighted DM for the lowest Top-1 accuracy ($0.24$). The production candidates are nearly tied: Weighted DM's estimated winner exceeds the runner-up by only $0.15\%$ on average (descriptive paired-bootstrap 95\% interval $[-0.31\%, 0.60\%]$), and their ordering reverses in $24$ of the $50$ subsamples, so offline evaluation cannot reliably separate the top two candidates, and this candidate set should be sent to online A/B testing. Per-policy Value MAE\% for each estimator is in Appendix~\ref{app:prod_detailed}.
}

\section{Conclusions}
\label{sec:conclusions}

\par{
We studied offline evaluation of ad-ranking policies in multi-stage production pipelines, where logged propensities are unavailable and direct-method (DM) replay with learned reward models is often the only viable option. We proposed mixture-weighted reward modeling, which trains the reward model on logged data reweighted toward a mixture of candidate-policy distributions using density ratios estimated by a multiclass domain classifier. We bounded each candidate's replay-DM value error by an overlap term, a cold-start term, and a replay concentration term, and derived an asymptotic rule that uses the cold-start term to decide, for a given regret budget, whether to trust the offline selection or route the candidates to online A/B testing.
}

\par{
In both the synthetic and production experiments, training a single reward model toward the candidate mixture makes DM replay more reliable under policy-induced shift: it improves policy selection most where overlap is weakest, and with a doubly robust correction it gives the highest Top-1 accuracy at low and medium overlap and on production data. The analysis also identifies the error that more logged data does not reduce: the bound retains a term for impressions a candidate generates where the logging data provide no support, and the cold-start mass bounds their share. Offline evaluation in multi-stage ad systems should therefore pair mixture-weighted reward models with an overlap diagnostic, trusting offline selection where overlap is adequate and reserving online A/B tests for candidates with large cold-start mass.
}

\par{
\textbf{Limitations.}
Our analysis assumes candidate policies can be replayed on logged contexts and that an impression's reward distribution does not depend on the policy that produced it. The cold-start mass is defined by the population domain posterior; computing it with the learned classifier requires calibration or error correction, and the weights are useful only when the representation retains policy-induced differences (Appendix~\ref{sec:dc_reliability_experiment}). The routing rule is asymptotic, and its operational performance under live traffic is left to future work. It also assumes that the weight error $\varepsilon_w$ vanishes, which, with a fixed clip $c$, requires the mixture ratio $\Omega(z)$ never to exceed $c$.
}

\begin{ack}
This work was funded by Meta Platforms, Inc. All authors are employees of Meta Platforms, Inc.
\end{ack}

\bibliography{references.bib}
\bibliographystyle{plainnat}

\newpage
\appendix

\section{Synthetic Simulator Details}
\label{app:simulator_details}

\par{
This appendix gives the simulator design used in Section~\ref{sec:synthetic_experiments}: the user and ad features, the oracle reward function, the logging- and candidate-policy parameterization, and the per-trial pipeline.
}

\subsection{Users, ads, and impression generation}
\par{
The simulator maintains a fixed corpus of $10{,}000$ users and $1{,}000$ ads, each represented by a feature vector drawn i.i.d.\ uniformly from $[-1,1]^3$. Each logged impression is generated as follows: a user is sampled uniformly at random from the corpus, $k=10$ candidate ads are sampled uniformly at random with replacement from the ad corpus, the acting policy scores each (user, ad) pair, and the highest-scoring ad is selected. The reward of an impression is the oracle's expected reward for the resulting (user, ad) pair, without Bernoulli sampling. Each trial uses $70{,}000$ impressions generated in this fashion.
}

\subsection{Oracle reward function}\label{app:oracle}
\par{
The oracle is a sum of $400$ Gaussian kernels in the joint user--ad feature space, each the product of a kernel on the user features and a kernel on the ad features, truncated at $1$:
\[
r^\star(x,a)=\min\Big\{1,\ \sum_{l=1}^{400}\alpha_l\exp\!\Big(-\frac{\|x-\mu^{x}_l\|^2}{2(\sigma^{x}_l)^2}\Big)\exp\!\Big(-\frac{\|a-\mu^{a}_l\|^2}{2(\sigma^{a}_l)^2}\Big)\Big\}.
\]
Here $x,a\in[-1,1]^3$ are the user and ad feature vectors. The kernel centers $\mu^{x}_l,\mu^{a}_l$ are drawn uniformly within $\pm0.5$ of the origin in each coordinate, widths $\sigma^{x}_l,\sigma^{a}_l$ uniformly from $[0.3,0.5]$, and amplitudes $\alpha_l$ uniformly from $[0.2,1.0]$, then rescaled so that the average reward of users paired with uniformly random ads is $0.15$. The oracle is redrawn with each random seed, so all candidate policies in a trial are scored against the same ground truth. The oracle value $V_j$ of a policy $\pi_j$ is obtained by applying $\pi_j$ to the evaluation contexts and averaging the oracle reward of the ads it selects.
}

\subsection{Logging policy and candidate-policy parameterization}
\par{
The logging policy $\pi_0$ and the candidate policies $\pi_1,\dots,\pi_6$ score a (user, ad) pair with a sum of $200$ Gaussian kernels of the same product form as the oracle's (Appendix~\ref{app:oracle}), whose centers lie within $\pm0.3$ of a policy center in each coordinate, with widths drawn uniformly from $[0.3,0.5]$ and amplitudes from $[0.1,1.0]$. The logging-policy center is drawn uniformly from $[-0.5,0.5]^6$ for each seed. Candidate centers are positioned by a calibration sweep over a grid of offsets along the line from the oracle center (the origin) to the logging-policy center and along a random orthogonal direction. For each grid point, the sweep computes the value multiplier $V_j/V_0$ and the ESS of the true density ratio relative to $\pi_0$, computed from the kernel scores, on $10{,}000$ logged impressions. For each of the six target multipliers $v_k\in\{0.95,1.00,1.05,1.10,1.15,1.20\}$ and each of the seven ESS targets $\{0.25,0.35,\dots,0.85\}$, the grid point whose multiplier is within $0.03$ of $v_k$ (or within $0.06$ if none is) and whose ESS is closest to the target is selected, and the candidate is built at those offsets with its own random orthogonal direction and kernel draw. If no grid point qualifies, the candidate is omitted; as a result, $56$ of the $350$ trials evaluate fewer than six candidates.
}

\subsection{Per-trial procedure}
\label{app:per_trial_procedure}
\par{
Each trial follows the same pipeline:\\
(1)~\textbf{Data collection and split.}
$70{,}000$ logged impressions are collected under $\pi_0$ and partitioned into four disjoint folds: $20{,}000$ for domain-classifier training, $20{,}000$ for reward-model training, $10{,}000$ for calibration, and $20{,}000$ for evaluation.\\
(2)~\textbf{Domain classifier training and calibration.}
Each candidate policy $\pi_j$ is replayed on the contexts of the domain-classifier fold to produce counterfactual impressions; these, together with that fold's logged impressions, form a multiclass dataset labeled by the generating policy. A one-hidden-layer MLP with $64$ units is trained on this dataset with Adam (learning rate $0.01$, batch size $512$) for $4$ epochs and calibrated by temperature scaling fitted on 5-fold out-of-fold logits.\\
(3)~\textbf{Importance weights.}
The calibrated classifier is applied to the reward-model fold, and its outputs are converted to the clipped mixture weight $\tilde\Omega(z)$, with uniform $\beta_j=1/N$, and the clipped per-policy weights $\tilde\Omega_j(z)$, using $c=100$. Value MAE\% is nearly the same for $c\in\{10,20,100\}$, a range that includes this $c=100$ (Appendix~\ref{app:exp2_clipping}).\\
(4)~\textbf{Reward model training.}
The unweighted, mixture-weighted, and per-policy weighted reward models share the same architecture, a one-hidden-layer MLP with $64$ units, and are trained on the reward-model fold with Adam (learning rate $10^{-3}$, batch size $512$) for $30$ epochs under a weighted log loss. They differ only in the sample weights: uniform, $\tilde\Omega(z)$, or $\tilde\Omega_j(z)$. Each model is temperature-calibrated on the calibration fold.\\
(5)~\textbf{Evaluation.}
On the evaluation fold, we compute the six estimators of Section~\ref{sec:estimators} for each candidate policy, with IPS and DR using the clipped per-policy weights $\tilde\Omega_j$ of the evaluation fold's logged impressions. Oracle values $V_j$ are obtained by applying each policy to the evaluation contexts and querying the oracle. The three metrics of Section~\ref{sec:metrics} are computed for each estimator. All training runs use a CPU.
}

\section{Proof of Theorem~\ref{thm:error_decomp}: Replay-DM Value Estimation Error}
\label{app:error_analysis}

\begin{proof}
Write $e(z)=\widehat r(z)-r^\star(z)$.

\paragraph{Replay decomposition.}
Since $Z=(X,\pi_j(X))$ with $X\sim P_X$ has distribution $P_{\pi_j}$ (Section~\ref{sec:problem_setup}), Equation~\eqref{eq:policy_value} gives $V_j=\mathbb E_{P_{\pi_j}}[r^\star(Z)]$. Adding and subtracting the population replay mean $\mathbb E_{P_{\pi_j}}[\widehat r(Z)]$ gives
\[
\widehat V_j^{\mathrm{DM}}-V_j
=
\Big(\frac{1}{m}\sum_{s=1}^m \widehat r\big(X_s,\pi_j(X_s)\big)-\mathbb E_{P_{\pi_j}}[\widehat r(Z)]\Big)
+\mathbb E_{P_{\pi_j}}[e(Z)],
\]
and the triangle inequality yields
\begin{align}
\left|\widehat V_j^{\mathrm{DM}}-V_j\right|
&\le
\left|
\frac{1}{m}\sum_{s=1}^m
\widehat r\!\left(X_s,\pi_j(X_s)\right)
-
\mathbb E_{P_{\pi_j}}[\widehat r(Z)]
\right|
+
\left|\mathbb E_{P_{\pi_j}}[e(Z)]\right|.
\label{eq:proof_replay_decomposition}
\end{align}
Because $\widehat r$ is trained only on the training folds, it is a fixed function once we condition on them. By Assumption~\ref{ass:A4}, the replay contexts $X_1,\ldots,X_m$ are i.i.d.\ from $P_X$ and independent of those folds, so the summands $\widehat r(X_s,\pi_j(X_s))$ are i.i.d.\ with mean $\mathbb E_{P_{\pi_j}}[\widehat r(Z)]$ and lie in $[0,R_{\max}]$ by Assumption~\ref{ass:A1}. Hoeffding's inequality followed by a union bound over the $N$ candidates therefore gives, with probability at least $1-\delta_{\mathrm{ev}}$,
\begin{equation}
\left|
\frac{1}{m}\sum_{s=1}^m
\widehat r\!\left(X_s,\pi_j(X_s)\right)
-
\mathbb E_{P_{\pi_j}}[\widehat r(Z)]
\right|
\le
R_{\max}\sqrt{\frac{\log(2N/\delta_{\mathrm{ev}})}{2m}}
\label{eq:proof_replay_concentration}
\end{equation}
simultaneously for all $j$.
Splitting the population bias over the regions in Equation~\eqref{eq:overlap_coldstart_regions} yields
\begin{align}
\left|\mathbb E_{P_{\pi_j}}[e(Z)]\right|
&\le
\mathbb E_{P_{\pi_j}}
\!\left[
|e(Z)|\mathbf 1\{Z\in\mathcal O_j(\tau)\}
\right]
+
\mathbb E_{P_{\pi_j}}
\!\left[
|e(Z)|\mathbf 1\{Z\in\mathcal C_j(\tau)\}
\right]
\nonumber\\
&\le
\mathcal E_j^{\mathrm{ov}}(\tau)
+
R_{\max}\mu_j(\tau),
\label{eq:proof_overlap_coldstart_split}
\end{align}
where the second inequality uses $|e(z)|\le R_{\max}$ from Assumption~\ref{ass:A1}. Combining Equations~\eqref{eq:proof_replay_decomposition}--\eqref{eq:proof_overlap_coldstart_split} proves Equation~\eqref{eq:error_bound_finite}.

\paragraph{Change-of-measure bound (i).}
For $z\in\mathcal O_j(\tau)$, Definition~\ref{def:overlap_coldstart} implies
\[
q^\star(0\mid z)
=
\frac{p_0(z)}{\sum_{k=0}^N p_k(z)}
\ge \tau,
\]
so $p_0(z)\ge\tau p_j(z)$ and hence $p_j(z)/p_0(z)\le1/\tau$. Therefore,
\begin{align}
\mathcal E_j^{\mathrm{ov}}(\tau)
&=
\mathbb E_{P_{\pi_0}}
\!\left[
\frac{p_j(Z)}{p_0(Z)}
|e(Z)|\mathbf 1\{Z\in\mathcal O_j(\tau)\}
\right]
\nonumber\\
&\le
\frac{1}{\tau}
\mathbb E_{P_{\pi_0}}
\!\left[
|e(Z)|\mathbf 1\{Z\in\mathcal O_j(\tau)\}
\right]
=
\frac{1}{\tau}\epsilon_{0,j}(\widehat r;\tau),
\label{eq:proof_change_of_measure}
\end{align}
which is bound (i) in Equation~\eqref{eq:overlap_bounds}.

\paragraph{Mixture-weighted bound (ii).}
Because $P_{\mathrm{mix}}=\sum_{k=1}^N\beta_kP_{\pi_k}$ with $\beta_j>0$, its density satisfies $p_{\mathrm{mix}}(z)\ge\beta_jp_j(z)$. Cauchy--Schwarz and the fact that $\mathcal O_j(\tau)$ lies on logging support give
\begin{align}
\left(\mathcal E_j^{\mathrm{ov}}(\tau)\right)^2
&\le
\mathbb E_{P_{\pi_j}}
\!\left[
e(Z)^2\mathbf 1\{Z\in\mathcal O_j(\tau)\}
\right]
\nonumber\\
&\le
\frac{1}{\beta_j}
\mathbb E_{P_{\mathrm{mix}}}
\!\left[
e(Z)^2\mathbf 1\{p_0(Z)>0\}
\right]
\nonumber\\
&\le
\frac{c_\ell}{\beta_j}
\left(R_w(\widehat r)-R_w(r^\star)\right).
\label{eq:proof_candidate_to_weighted_risk}
\end{align}
The last inequality applies the calibration condition of Assumption~\ref{ass:A2}; the identification with weighted excess risk uses the shared conditional reward law given $Z$ and the support-restricted mixture identity in Equation~\eqref{eq:weighted_loss}.

It remains to control the weighted excess risk. Let $f^\star_{\mathcal F}$ be the best-in-class predictor from Equation~\eqref{eq:best_in_class_predictor}. Adding and subtracting the clipped-weight population and empirical risks gives
\begin{align}
R_w(\widehat r)-R_w(f^\star_{\mathcal F})
&=
\bigl(R_w(\widehat r)-R_{\widetilde w}(\widehat r)\bigr)
+
\bigl(R_{\widetilde w}(\widehat r)-\widehat R_{\widetilde w}(\widehat r)\bigr)
\nonumber\\
&\quad+
\bigl(\widehat R_{\widetilde w}(\widehat r)-\widehat R_{\widetilde w}(f^\star_{\mathcal F})\bigr)
+
\bigl(\widehat R_{\widetilde w}(f^\star_{\mathcal F})-R_{\widetilde w}(f^\star_{\mathcal F})\bigr)
\nonumber\\
&\quad+
\bigl(R_{\widetilde w}(f^\star_{\mathcal F})-R_w(f^\star_{\mathcal F})\bigr).
\label{eq:proof_weighted_erm_telescope}
\end{align}
On the training event, the second and fourth terms are each at most $\Gamma_{n_{\mathrm{tr}}}(\delta_{\mathrm{tr}})$ in absolute value, while Assumption~\ref{ass:A3} bounds the middle empirical-risk difference by $\varepsilon_{\mathrm{opt}}$. For any $f\in\mathcal F$,
\begin{equation}
\left|R_w(f)-R_{\widetilde w}(f)\right|
\le
L_{\max}\mathbb E_{P_{\pi_0}}[|w(Z)-\widetilde w(Z)|]
=
L_{\max}\varepsilon_w.
\label{eq:proof_weight_risk_difference}
\end{equation}
Applying this inequality to the first and fifth terms of Equation~\eqref{eq:proof_weighted_erm_telescope}, then adding the approximation error $A_{\mathcal F}=R_w(f^\star_{\mathcal F})-R_w(r^\star)$, gives
\begin{equation}
R_w(\widehat r)-R_w(r^\star)
\le
A_{\mathcal F}
+
\varepsilon_{\mathrm{opt}}
+
2\Gamma_{n_{\mathrm{tr}}}(\delta_{\mathrm{tr}})
+
2L_{\max}\varepsilon_w.
\label{eq:proof_weighted_excess_risk}
\end{equation}
Substituting Equation~\eqref{eq:proof_weighted_excess_risk} into Equation~\eqref{eq:proof_candidate_to_weighted_risk} and taking square roots proves bound (ii) in Equation~\eqref{eq:overlap_bounds}.

\paragraph{Joint probability.}
The weighted-risk bound holds on the training event with probability at least $1-\delta_{\mathrm{tr}}$, and the simultaneous replay bound holds with probability at least $1-\delta_{\mathrm{ev}}$. A final union bound gives joint probability at least $1-\delta_{\mathrm{tr}}-\delta_{\mathrm{ev}}$, completing the proof.
\end{proof}

\section{Proof of Corollary~\ref{cor:regret}: Top-1 Selection Regret}
\label{app:regret_proof}

\begin{proof}
By definition of $\hat\jmath$, we have $\widehat V_{\hat\jmath}^{\mathrm{DM}} \ge \widehat V_{j^\star}^{\mathrm{DM}}$.
We decompose the regret as
\begin{align}
V_{j^\star} - V_{\hat\jmath}
&=
(V_{j^\star} - \widehat V_{j^\star}^{\mathrm{DM}})
+
(\widehat V_{j^\star}^{\mathrm{DM}} - \widehat V_{\hat\jmath}^{\mathrm{DM}})
+
(\widehat V_{\hat\jmath}^{\mathrm{DM}} - V_{\hat\jmath}).
\nonumber
\end{align}
The middle term is non-positive since $\widehat V_{\hat\jmath}^{\mathrm{DM}} \ge \widehat V_{j^\star}^{\mathrm{DM}}$, and the remaining terms can be bounded using absolute values:
\begin{align}
V_{j^\star} - V_{\hat\jmath}
&\le
|V_{j^\star} - \widehat V_{j^\star}^{\mathrm{DM}}|
+
|\widehat V_{\hat\jmath}^{\mathrm{DM}} - V_{\hat\jmath}|.
\nonumber
\end{align}
Finally, each term is bounded by the maximum estimation error across all policies, yielding
\begin{align}
V_{j^\star} - V_{\hat\jmath}
\le
2\,\max_{j=1,\dots,N} \epsilon_j.
\nonumber
\end{align}
\end{proof}

\section{Proof of Corollary~\ref{cor:decision_rule}: Asymptotic Offline-vs-A/B Routing Rule}
\label{app:decision_rule_proof}

\begin{proof}
Write $\delta_{\mathrm{tr}}=\delta_{\mathrm{tr}}(n_{\mathrm{tr}})$ and $\delta_{\mathrm{ev}}=\delta_{\mathrm{ev}}(m)$, and let $B_j$ denote the right-hand side of form (ii) in Equation~\eqref{eq:overlap_bounds},
\[
B_j
=
\sqrt{
\frac{c_\ell}{\beta_j}
\left(
A_{\mathcal F}
+\varepsilon_{\mathrm{opt}}
+2\Gamma_{n_{\mathrm{tr}}}(\delta_{\mathrm{tr}})
+2L_{\max}\varepsilon_w
\right)
}.
\]
By Theorem~\ref{thm:error_decomp}, with probability at least $1-\delta_{\mathrm{tr}}-\delta_{\mathrm{ev}}$, simultaneously for all candidates $j$,
\[
\left|\widehat V_j^{\mathrm{DM}}-V_j\right|
\le
B_j
+
R_{\max}\mu_j(\tau)
+
R_{\max}\sqrt{\frac{\log(2N/\delta_{\mathrm{ev}})}{2m}}.
\]
On the same event, Corollary~\ref{cor:regret} gives
\[
\mathrm{Regret}
\le
2\max_{j}\left|\widehat V_j^{\mathrm{DM}}-V_j\right|
\le
2R_{\max}\max_j\mu_j(\tau)+\rho,
\]
where
\[
\rho
\triangleq
2\max_j B_j
+
2R_{\max}\sqrt{\frac{\log(2N/\delta_{\mathrm{ev}})}{2m}}.
\]
Because $N$ and $\beta_1,\dots,\beta_N>0$ are fixed, the assumed limits $A_{\mathcal F}\to0$, $\varepsilon_{\mathrm{opt}}\to0$, $\Gamma_{n_{\mathrm{tr}}}(\delta_{\mathrm{tr}})\to0$, and $\varepsilon_w\to0$ give $\max_j B_j\to0$ in probability, and $\log(2N/\delta_{\mathrm{ev}})/m\to0$ sends the replay term to zero, so $\rho\to0$ in probability. Hence, for every $\eta>0$,
\[
\Pr\!\left(
\mathrm{Regret}
>
2R_{\max}\max_j\mu_j(\tau)+\eta
\right)
\le
\delta_{\mathrm{tr}}+\delta_{\mathrm{ev}}+\Pr(\rho>\eta)
\to0,
\]
which is Equation~\eqref{eq:asymptotic_regret_probability}. Finally, if Equation~\eqref{eq:decision_rule} holds, then $\{\mathrm{Regret}>\Delta+\eta\}\subseteq\{\mathrm{Regret}>2R_{\max}\max_j\mu_j(\tau)+\eta\}$, so $\Pr(\mathrm{Regret}>\Delta+\eta)\to0$.
\end{proof}

\section{Detailed Per-ESS-Bin Results for Experiment 1}
\label{app:detailed_results}

\par{
This appendix breaks down Experiment 1 by the realized ESS, the ESS of the estimated mixture weights (Equation~\eqref{eq:ess}), whereas Table~\ref{tab:method_comparison_regimes} groups trials by ESS target: Top-1 accuracy (Table~\ref{tab:appendix_acc1}), Top-1 regret (Table~\ref{tab:appendix_regret}), and Value MAE\% (Table~\ref{tab:appendix_mae}). We additionally report a per-policy Value MAE\% breakdown by candidate value multiplier $v_k$ (Table~\ref{tab:per_policy_mae}).
We include the same six estimators from Section~\ref{sec:estimators}: IPS, Unweighted DM, Weighted DM, Per-policy Weighted DM, Unweighted DR, and Weighted DR.
Each row of Tables~\ref{tab:appendix_acc1}--\ref{tab:appendix_mae} reports the mean $\pm$ standard error (SE = std$/\sqrt{n}$) over trials assigned to the corresponding realized-ESS bin. The rounded bin labels from $0.1$ through $1.0$ are all populated across the $350$ trials.
Within each row, the best method mean is shown in \textbf{bold}.
}

\begin{table*}[t]
\centering
\caption{Top-1 accuracy $\uparrow$ by realized-ESS bin (mean $\pm$ SE). Best mean in each row is shown in \textbf{bold}.}
\label{tab:appendix_acc1}
\small
\setlength{\tabcolsep}{3pt}
\renewcommand{\arraystretch}{1.05}
\resizebox{\textwidth}{!}{%
\begin{tabular}{cc cccccc}
\toprule
ESS & $n$ & IPS & \shortstack{Unweighted\\DM} & \shortstack{Weighted DM\\(ours)} & \shortstack{Per-policy\\Weighted DM} & \shortstack{Unweighted\\DR} & \shortstack{Weighted DR\\(ours)} \\
\midrule
0.1 & 22 & $0.500\,{\pm}\,0.107$ & $0.409\,{\pm}\,0.105$ & $0.727\,{\pm}\,0.095$ & $\mathbf{0.818}\,{\pm}\,0.082$ & $0.727\,{\pm}\,0.095$ & $0.773\,{\pm}\,0.089$ \\
0.2 & 26 & $0.615\,{\pm}\,0.095$ & $0.654\,{\pm}\,0.093$ & $0.808\,{\pm}\,0.077$ & $0.692\,{\pm}\,0.091$ & $0.923\,{\pm}\,0.052$ & $\mathbf{0.962}\,{\pm}\,0.038$ \\
0.3 & 35 & $0.657\,{\pm}\,0.080$ & $0.600\,{\pm}\,0.083$ & $0.714\,{\pm}\,0.076$ & $0.714\,{\pm}\,0.076$ & $0.743\,{\pm}\,0.074$ & $\mathbf{0.857}\,{\pm}\,0.059$ \\
0.4 & 26 & $0.577\,{\pm}\,0.097$ & $0.769\,{\pm}\,0.083$ & $0.808\,{\pm}\,0.077$ & $0.846\,{\pm}\,0.071$ & $\mathbf{0.885}\,{\pm}\,0.063$ & $\mathbf{0.885}\,{\pm}\,0.063$ \\
0.5 & 30 & $0.567\,{\pm}\,0.090$ & $0.900\,{\pm}\,0.055$ & $0.933\,{\pm}\,0.046$ & $0.667\,{\pm}\,0.086$ & $0.933\,{\pm}\,0.046$ & $\mathbf{1.000}\,{\pm}\,0.000$ \\
0.6 & 34 & $0.618\,{\pm}\,0.083$ & $0.882\,{\pm}\,0.055$ & $0.882\,{\pm}\,0.055$ & $0.824\,{\pm}\,0.065$ & $0.882\,{\pm}\,0.055$ & $\mathbf{0.971}\,{\pm}\,0.029$ \\
0.7 & 50 & $0.680\,{\pm}\,0.066$ & $0.920\,{\pm}\,0.038$ & $\mathbf{0.980}\,{\pm}\,0.020$ & $0.860\,{\pm}\,0.049$ & $0.960\,{\pm}\,0.028$ & $0.960\,{\pm}\,0.028$ \\
0.8 & 41 & $0.659\,{\pm}\,0.074$ & $\mathbf{1.000}\,{\pm}\,0.000$ & $\mathbf{1.000}\,{\pm}\,0.000$ & $0.902\,{\pm}\,0.046$ & $\mathbf{1.000}\,{\pm}\,0.000$ & $\mathbf{1.000}\,{\pm}\,0.000$ \\
0.9 & 71 & $0.775\,{\pm}\,0.050$ & $0.944\,{\pm}\,0.027$ & $0.972\,{\pm}\,0.020$ & $0.901\,{\pm}\,0.035$ & $\mathbf{1.000}\,{\pm}\,0.000$ & $0.986\,{\pm}\,0.014$ \\
1.0 & 15 & $0.600\,{\pm}\,0.126$ & $\mathbf{1.000}\,{\pm}\,0.000$ & $\mathbf{1.000}\,{\pm}\,0.000$ & $0.933\,{\pm}\,0.064$ & $\mathbf{1.000}\,{\pm}\,0.000$ & $\mathbf{1.000}\,{\pm}\,0.000$ \\
\bottomrule
\end{tabular}
}%
\end{table*}

\begin{table*}[t]
\centering
\caption{Top-1 regret $\downarrow$ by realized-ESS bin (mean $\pm$ SE). Best mean in each row is shown in \textbf{bold}.}
\label{tab:appendix_regret}
\small
\setlength{\tabcolsep}{3pt}
\renewcommand{\arraystretch}{1.05}
\resizebox{\textwidth}{!}{%
\begin{tabular}{cc cccccc}
\toprule
ESS & $n$ & IPS & \shortstack{Unweighted\\DM} & \shortstack{Weighted DM\\(ours)} & \shortstack{Per-policy\\Weighted DM} & \shortstack{Unweighted\\DR} & \shortstack{Weighted DR\\(ours)} \\
\midrule
0.1 & 22 & $0.0041\,{\pm}\,0.0018$ & $0.0094\,{\pm}\,0.0026$ & $0.0034\,{\pm}\,0.0016$ & $0.0048\,{\pm}\,0.0024$ & $0.0042\,{\pm}\,0.0023$ & $\mathbf{0.0012}\,{\pm}\,0.0006$ \\
0.2 & 26 & $0.0054\,{\pm}\,0.0019$ & $0.0048\,{\pm}\,0.0017$ & $0.0015\,{\pm}\,0.0007$ & $0.0023\,{\pm}\,0.0012$ & $0.0010\,{\pm}\,0.0007$ & $\mathbf{0.0005}\,{\pm}\,0.0005$ \\
0.3 & 35 & $0.0042\,{\pm}\,0.0016$ & $0.0035\,{\pm}\,0.0015$ & $0.0013\,{\pm}\,0.0005$ & $0.0026\,{\pm}\,0.0012$ & $0.0012\,{\pm}\,0.0004$ & $\mathbf{0.0005}\,{\pm}\,0.0002$ \\
0.4 & 26 & $0.0032\,{\pm}\,0.0011$ & $0.0012\,{\pm}\,0.0006$ & $0.0004\,{\pm}\,0.0003$ & $0.0008\,{\pm}\,0.0005$ & $\mathbf{0.0002}\,{\pm}\,0.0001$ & $\mathbf{0.0002}\,{\pm}\,0.0001$ \\
0.5 & 30 & $0.0054\,{\pm}\,0.0016$ & $0.0004\,{\pm}\,0.0003$ & $0.0004\,{\pm}\,0.0003$ & $0.0028\,{\pm}\,0.0010$ & $0.0003\,{\pm}\,0.0002$ & $\mathbf{0.0000}\,{\pm}\,0.0000$ \\
0.6 & 34 & $0.0040\,{\pm}\,0.0013$ & $0.0004\,{\pm}\,0.0003$ & $0.0002\,{\pm}\,0.0001$ & $0.0018\,{\pm}\,0.0009$ & $0.0002\,{\pm}\,0.0001$ & $\mathbf{0.0000}\,{\pm}\,0.0000$ \\
0.7 & 50 & $0.0035\,{\pm}\,0.0013$ & $0.0003\,{\pm}\,0.0002$ & $\mathbf{0.0000}\,{\pm}\,0.0000$ & $0.0008\,{\pm}\,0.0004$ & $0.0001\,{\pm}\,0.0000$ & $0.0001\,{\pm}\,0.0000$ \\
0.8 & 41 & $0.0048\,{\pm}\,0.0016$ & $\mathbf{0.0000}\,{\pm}\,0.0000$ & $\mathbf{0.0000}\,{\pm}\,0.0000$ & $0.0006\,{\pm}\,0.0003$ & $\mathbf{0.0000}\,{\pm}\,0.0000$ & $\mathbf{0.0000}\,{\pm}\,0.0000$ \\
0.9 & 71 & $0.0025\,{\pm}\,0.0007$ & $0.0004\,{\pm}\,0.0003$ & $\mathbf{0.0000}\,{\pm}\,0.0000$ & $0.0005\,{\pm}\,0.0002$ & $\mathbf{0.0000}\,{\pm}\,0.0000$ & $\mathbf{0.0000}\,{\pm}\,0.0000$ \\
1.0 & 15 & $0.0045\,{\pm}\,0.0021$ & $\mathbf{0.0000}\,{\pm}\,0.0000$ & $\mathbf{0.0000}\,{\pm}\,0.0000$ & $0.0002\,{\pm}\,0.0002$ & $\mathbf{0.0000}\,{\pm}\,0.0000$ & $\mathbf{0.0000}\,{\pm}\,0.0000$ \\
\bottomrule
\end{tabular}
}%
\end{table*}

\begin{table*}[t]
\centering
\caption{Value MAE\% $\downarrow$ by realized-ESS bin (mean $\pm$ SE). Best mean in each row is shown in \textbf{bold}.}
\label{tab:appendix_mae}
\scriptsize
\setlength{\tabcolsep}{3pt}
\renewcommand{\arraystretch}{1.05}
\begin{tabular}{cc cccccc}
\toprule
ESS & $n$ & IPS & \shortstack{Unweighted\\DM} & \shortstack{Weighted DM\\(ours)} & \shortstack{Per-policy\\Weighted DM} & \shortstack{Unweighted\\DR} & \shortstack{Weighted DR\\(ours)} \\
\midrule
0.1 & 22 & $8.14\,{\pm}\,0.63$ & $12.22\,{\pm}\,1.32$ & $3.62\,{\pm}\,0.46$ & $3.91\,{\pm}\,0.50$ & $5.26\,{\pm}\,0.82$ & $\mathbf{3.45}\,{\pm}\,0.54$ \\
0.2 & 26 & $7.25\,{\pm}\,0.36$ & $8.47\,{\pm}\,0.89$ & $2.70\,{\pm}\,0.20$ & $3.40\,{\pm}\,0.42$ & $2.88\,{\pm}\,0.35$ & $\mathbf{1.75}\,{\pm}\,0.24$ \\
0.3 & 35 & $6.95\,{\pm}\,0.50$ & $5.93\,{\pm}\,0.51$ & $2.40\,{\pm}\,0.18$ & $3.42\,{\pm}\,0.34$ & $1.75\,{\pm}\,0.18$ & $\mathbf{1.19}\,{\pm}\,0.10$ \\
0.4 & 26 & $7.04\,{\pm}\,0.53$ & $3.87\,{\pm}\,0.39$ & $1.61\,{\pm}\,0.16$ & $2.56\,{\pm}\,0.24$ & $1.17\,{\pm}\,0.17$ & $\mathbf{0.59}\,{\pm}\,0.07$ \\
0.5 & 30 & $5.28\,{\pm}\,0.40$ & $2.82\,{\pm}\,0.28$ & $2.17\,{\pm}\,0.17$ & $3.24\,{\pm}\,0.30$ & $0.80\,{\pm}\,0.10$ & $\mathbf{0.62}\,{\pm}\,0.06$ \\
0.6 & 34 & $6.23\,{\pm}\,0.43$ & $2.45\,{\pm}\,0.24$ & $1.64\,{\pm}\,0.14$ & $2.75\,{\pm}\,0.23$ & $0.73\,{\pm}\,0.07$ & $\mathbf{0.40}\,{\pm}\,0.04$ \\
0.7 & 50 & $5.85\,{\pm}\,0.32$ & $1.88\,{\pm}\,0.15$ & $1.92\,{\pm}\,0.13$ & $2.74\,{\pm}\,0.19$ & $0.54\,{\pm}\,0.04$ & $\mathbf{0.36}\,{\pm}\,0.03$ \\
0.8 & 41 & $4.54\,{\pm}\,0.29$ & $1.44\,{\pm}\,0.12$ & $1.79\,{\pm}\,0.12$ & $2.42\,{\pm}\,0.15$ & $0.38\,{\pm}\,0.04$ & $\mathbf{0.24}\,{\pm}\,0.02$ \\
0.9 & 71 & $4.10\,{\pm}\,0.21$ & $1.67\,{\pm}\,0.09$ & $2.07\,{\pm}\,0.10$ & $2.69\,{\pm}\,0.10$ & $0.30\,{\pm}\,0.02$ & $\mathbf{0.22}\,{\pm}\,0.01$ \\
1.0 & 15 & $4.59\,{\pm}\,0.45$ & $1.98\,{\pm}\,0.26$ & $1.92\,{\pm}\,0.18$ & $2.30\,{\pm}\,0.22$ & $0.23\,{\pm}\,0.02$ & $\mathbf{0.19}\,{\pm}\,0.01$ \\
\bottomrule
\end{tabular}
\end{table*}

\begin{table*}[t]
\centering
\caption{Per-policy Value MAE\% $\downarrow$ by requested candidate value multiplier $v_k$ (mean $\pm$ SE). The column $n$ gives the number of trials containing the candidate. Best mean in each row is shown in \textbf{bold}.}
\label{tab:per_policy_mae}
\scriptsize
\setlength{\tabcolsep}{3pt}
\renewcommand{\arraystretch}{1.05}
\begin{tabular}{cc cccccc}
\toprule
$v_k$ & $n$ & IPS & \shortstack{Unweighted\\DM} & \shortstack{Weighted DM\\(ours)} & \shortstack{Per-policy\\Weighted DM} & \shortstack{Unweighted\\DR} & \shortstack{Weighted DR\\(ours)} \\
\midrule
0.95 & 350 & $5.11\,{\pm}\,0.23$ & $3.90\,{\pm}\,0.33$ & $2.13\,{\pm}\,0.09$ & $3.37\,{\pm}\,0.12$ & $1.18\,{\pm}\,0.16$ & $\mathbf{0.74}\,{\pm}\,0.09$ \\
1.00 & 350 & $5.44\,{\pm}\,0.25$ & $3.52\,{\pm}\,0.25$ & $2.04\,{\pm}\,0.08$ & $3.31\,{\pm}\,0.14$ & $1.02\,{\pm}\,0.09$ & $\mathbf{0.66}\,{\pm}\,0.06$ \\
1.05 & 350 & $5.32\,{\pm}\,0.23$ & $3.93\,{\pm}\,0.29$ & $2.02\,{\pm}\,0.09$ & $3.23\,{\pm}\,0.13$ & $1.21\,{\pm}\,0.13$ & $\mathbf{0.81}\,{\pm}\,0.09$ \\
1.10 & 343 & $6.42\,{\pm}\,0.27$ & $3.92\,{\pm}\,0.32$ & $2.24\,{\pm}\,0.09$ & $2.86\,{\pm}\,0.13$ & $1.36\,{\pm}\,0.18$ & $\mathbf{0.81}\,{\pm}\,0.10$ \\
1.15 & 336 & $6.13\,{\pm}\,0.25$ & $3.11\,{\pm}\,0.22$ & $2.22\,{\pm}\,0.08$ & $2.53\,{\pm}\,0.12$ & $0.98\,{\pm}\,0.09$ & $\mathbf{0.65}\,{\pm}\,0.06$ \\
1.20 & 294 & $5.62\,{\pm}\,0.28$ & $3.67\,{\pm}\,0.28$ & $2.32\,{\pm}\,0.10$ & $2.26\,{\pm}\,0.11$ & $1.20\,{\pm}\,0.13$ & $\mathbf{0.78}\,{\pm}\,0.08$ \\
\bottomrule
\end{tabular}
\end{table*}

\par{
Across realized-ESS bins, Weighted DR has the lowest Value MAE\% in every row, decreasing overall from $3.45$ at ESS $0.1$ to $0.19$ at ESS $1.0$. It also has the lowest regret in the three lowest-ESS bins, while multiple reward-model estimators reach near-zero regret as overlap increases. Top-1 accuracy improves overall with overlap, although the leading estimator varies by bin: Per-policy Weighted DM is strongest at ESS $0.1$, Weighted DR leads at ESS $0.2$, $0.3$, $0.5$, and $0.6$, and Weighted DM leads at ESS $0.7$. The per-policy breakdown shows that Weighted DR has the lowest Value MAE\% for every candidate value multiplier, while Weighted DM improves on Unweighted DM for all six candidates.
}

\section{Synthetic Experiment 2: Sensitivity to Importance Weight Clipping}
\label{app:exp2_clipping}

\par{
A practical question for any density-ratio weighting scheme is how aggressively to clip the estimated weights.
Clipping truncates the upper tail of the density-ratio weights, attenuating the correction for impressions far from the logging distribution; at $c=1$ all up-weighting is removed.
We study this tradeoff across different levels of policy-logging overlap, using the same simulation setup as Experiment~1.
}

\par{
Because policies are kernel scoring functions in the same feature space, the true mixture density ratio $\Omega(z) = p_{\mathrm{mix}}(z)/p_0(z)$ can be computed analytically from kernel scores, without a domain classifier.
We use the ESS of these \emph{kernel-based} weights as the ground-truth overlap measure, decoupling it from both the domain classifier and the clip threshold.
We run 50 seeds $\times$ 7 ESS levels, each evaluated at $c \in \{1, 5, 10, 20, 100\}$; the same domain classifier is trained once per trial and the five clip values are applied post hoc.
}

\par{
Figure~\ref{fig:clip_sensitivity} shows that Value MAE\% is largely insensitive to the clip threshold for $c \in \{10, 20, 100\}$: in every kernel-ESS bin these three settings differ by at most $0.52$ percentage points, while at kernel ESS $0.2$--$0.4$ they reduce Value MAE\% by $4.1$--$8.2$ points relative to the unweighted model.
From kernel ESS $0.5$ upward, $c=5$ matches or slightly improves on these settings, but at kernel ESS $0.2$--$0.4$ its Value MAE\% is higher (e.g., $3.20$ versus $2.10$--$2.62$ at kernel ESS $0.3$).
The exception is $c=1$: at kernel ESS $0.3$--$0.4$ its Value MAE\% ($6.24$--$6.55$) is close to that of the unweighted model ($6.69$--$7.27$), whereas at kernel ESS $0.9$--$1.0$ it attains the lowest Value MAE\% ($1.54$ and $1.38$, versus $1.90$--$2.36$ for the unweighted model and the other clip values).
In Top-1 regret and accuracy, $c=1$ is likewise closest to the unweighted model at kernel ESS $0.3$--$0.4$; at kernel ESS $0.2$, $c=5$ and $c=20$ have Top-1 regret $0.0137$ versus $0.0036$ for the other clip values.
The clip value used in Experiment~1, $c=100$, lies in this insensitive range.
}

\begin{figure}[ht]
\centering
\includegraphics[width=\textwidth]{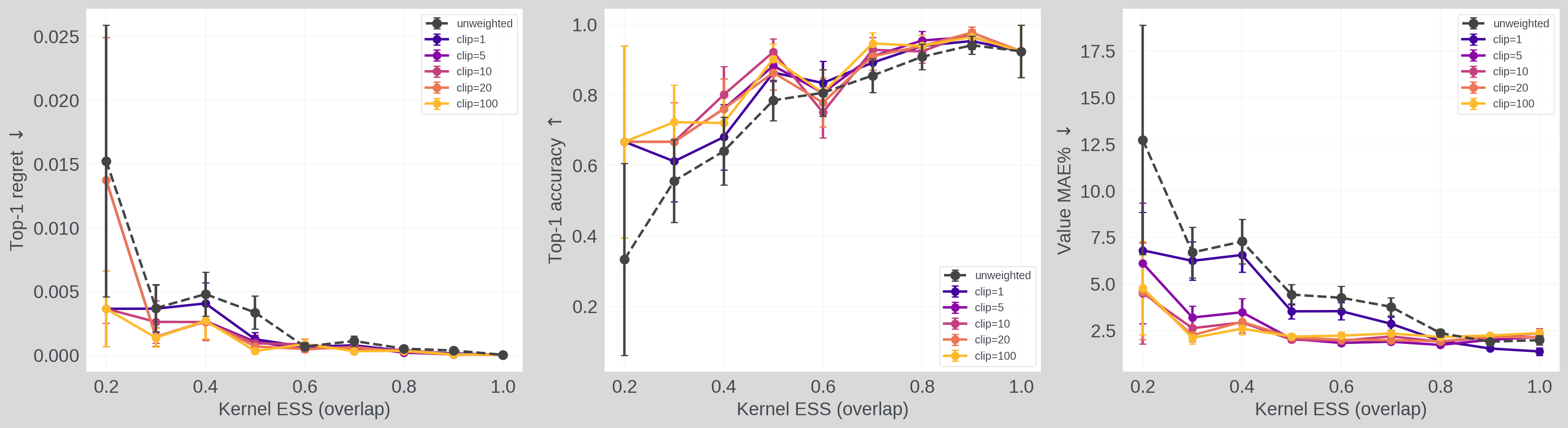}
\caption{Experiment 2: sensitivity of policy-selection metrics to importance-weight clipping as a function of kernel ESS (oracle, unclipped). Each point shows the mean $\pm$ standard error of the trials in one realized kernel-ESS bin (50 seeds $\times$ 7 ESS targets in total); the unweighted baseline is dashed black. Value MAE\% curves for $c \in \{10, 20, 100\}$ are tightly clustered; $c=1$ removes all up-weighting. $\uparrow$: higher is better; $\downarrow$: lower is better.}
\label{fig:clip_sensitivity}
\end{figure}

\section{Domain Classifier Reliability}
\label{sec:dc_reliability_experiment}

\par{
All estimators except Unweighted DM use density-ratio weights computed from the domain classifier's predicted probabilities (Section~\ref{sec:domain_classifier}), so their results depend on the quality of the classifier. We test this dependence by removing the classifier's calibration step and by masking the ad features it sees, and measure the effect on the downstream metrics of Experiment~1: Top-1 accuracy, Top-1 regret, and Value MAE\%. Each classifier variant below is evaluated on the same $350$ trials of Experiment~1, with everything else fixed:
\begin{itemize}
\setlength{\itemsep}{0pt}
\item \textbf{Experiment~1 DC}: the classifier of Experiment~1 (Appendix~\ref{app:per_trial_procedure}), trained on its own fold, disjoint from the reward model's (cross-fitting), and temperature-calibrated on 5-fold out-of-fold logits.
\item \textbf{Uncalibrated DC}: Experiment~1 DC without temperature calibration.
\item \textbf{Feature Masked DC}: Experiment~1 DC with one, two, or all three ad features set to zero ($33\%$, $67\%$, and $100\%$ ad masked). With all three masked, only the user features remain, and these are the same under every policy, so the classifier cannot tell the policies apart.
\end{itemize}
Table~\ref{tab:dc_estimator_performance} shows the results. Unweighted DM does not use the classifier, so its results are omitted here. Experiment~1 DC and Uncalibrated DC perform about the same for every estimator, so calibration has little effect on the downstream metrics. Masking the ad features degrades them: as $33\%$, $67\%$, and $100\%$ of the ad features are masked, the Top-1 accuracy of Weighted DM falls from $0.90$ to $0.86$, $0.83$, and $0.84$, that of Weighted DR from $0.94$ to $0.86$, $0.81$, and $0.84$, and that of IPS from $0.62$ to $0.27$. Per-policy Weighted DM has lower Top-1 accuracy than Weighted DM for every variant except full masking, where both reach $0.84$. With all ad features masked, the weighted estimators end close to their unweighted counterparts.
}

\begin{table*}[t]
\centering
\caption{Estimator performance under five domain-classifier variants over $50$ seeds. Each estimator cell reports Top-1 accuracy / Top-1 regret / Value MAE\% (Acc / Reg / MAE\%). Lower Reg and MAE\% are better; higher Acc is better. Best displayed value for each metric within each estimator column is in \textbf{bold}; ties are all bolded.}
\label{tab:dc_estimator_performance}
\small
\setlength{\tabcolsep}{3pt}
\renewcommand{\arraystretch}{1.05}
\resizebox{\textwidth}{!}{%
\begin{tabular}{l ccccc}
\toprule
DC variant & \shortstack{IPS \\ Acc $\uparrow$ / Reg $\downarrow$ / MAE\% $\downarrow$} & \shortstack{Weighted DM \\ Acc $\uparrow$ / Reg $\downarrow$ / MAE\% $\downarrow$} & \shortstack{Per-policy Weighted DM \\ Acc $\uparrow$ / Reg $\downarrow$ / MAE\% $\downarrow$} & \shortstack{Unweighted DR \\ Acc $\uparrow$ / Reg $\downarrow$ / MAE\% $\downarrow$} & \shortstack{Weighted DR \\ Acc $\uparrow$ / Reg $\downarrow$ / MAE\% $\downarrow$} \\
\midrule
Experiment~1 DC & \textbf{0.62} / 0.0052 / 5.56 & \textbf{0.90} / \textbf{0.0005} / \textbf{2.16} & 0.82 / 0.0016 / 2.92 & \textbf{0.90} / 0.0008 / 1.17 & \textbf{0.94} / \textbf{0.0002} / 0.73 \\
Uncalibrated DC & 0.59 / \textbf{0.0051} / \textbf{5.46} & \textbf{0.90} / 0.0006 / 2.47 & \textbf{0.84} / \textbf{0.0012} / 2.97 & 0.89 / \textbf{0.0007} / \textbf{1.14} & 0.93 / 0.0003 / \textbf{0.71} \\
Feature Masked DC ($33\%$) & 0.55 / 0.0106 / 7.85 & 0.86 / 0.0014 / 2.52 & 0.83 / 0.0014 / \textbf{2.61} & 0.86 / 0.0013 / 2.30 & 0.86 / 0.0012 / 2.41 \\
Feature Masked DC ($67\%$) & 0.33 / 0.0271 / 9.53 & 0.83 / 0.0019 / 3.09 & 0.80 / 0.0022 / 2.99 & 0.82 / 0.0017 / 3.48 & 0.81 / 0.0020 / 3.83 \\
Feature Masked DC ($100\%$) & 0.27 / 0.0315 / 10.45 & 0.84 / 0.0018 / 3.43 & \textbf{0.84} / 0.0015 / 3.50 & 0.83 / 0.0019 / 4.79 & 0.84 / 0.0018 / 4.56 \\
\bottomrule
\end{tabular}%
}
\end{table*}

\par{
Table~\ref{tab:dc_reliability_diagnostics} reports five measures of the classifier itself, to test whether they reveal this degradation without the downstream metrics. \textbf{NLL} is the negative log-likelihood of the classifier's predicted probability for the true policy on held-out impressions. \textbf{Cand.\ acc.}\ is the fraction of held-out candidate impressions for which the classifier ranks the candidate that produced them first among the candidates. \textbf{Ratio corr.}\ is the rank correlation between the estimated and the true density ratios, averaged over candidates. \textbf{Est.\ ESS} is $\mathrm{ESS}/n$ of the estimated mixture weights (Equation~\eqref{eq:ess}). \textbf{Cold-start err.}\ is the absolute difference between the estimated and the true cold-start mass at $\tau=0.1$. Ratio correlation and cold-start error need the true policies, so they are available only in simulation. Removing calibration leaves all five measures essentially unchanged, consistent with its small downstream effect. Masking the ad features moves them with the downstream degradation: NLL rises from $1.71$ to $1.91$, candidate accuracy falls from $0.32$ to $0.18$, ratio correlation falls from $0.47$ to $0.24$ at $67\%$ masking (it is undefined at $100\%$, where the estimated weights are constant), and cold-start error rises from $0.16$ to $0.39$. The estimated ESS instead rises from $0.59$ to $1.00$: the weights flatten because the classifier loses the ad features, not because overlap improves. A high ESS therefore indicates good overlap only when the classifier can tell the policies apart. NLL and candidate accuracy flag when it cannot without the oracle, but cannot tell whether the policies are genuinely similar or the features are missing; in simulation, ratio correlation and cold-start error separate these cases.
}

\begin{table*}[t]
\centering
\caption{Domain-classifier reliability diagnostics for the five classifier variants. Oracle ratio correlation and cold-start error are available only in simulation; the ratio correlation is undefined under full masking, where the estimated weights are constant. Lower NLL and cold-start error are better; higher candidate accuracy and ratio correlation are better. Best displayed value for each directional metric is in \textbf{bold}; ties are all bolded.}
\label{tab:dc_reliability_diagnostics}
\scriptsize
\setlength{\tabcolsep}{3pt}
\renewcommand{\arraystretch}{1.05}
\begin{tabular}{l ccc c c}
\toprule
DC variant & NLL $\downarrow$ & Cand. acc. $\uparrow$ & Ratio corr. $\uparrow$ & Est. ESS & Cold-start err. $\downarrow$ \\
\midrule
Experiment~1 DC & \textbf{1.71} & \textbf{0.32} & \textbf{0.47} & 0.59 & \textbf{0.16} \\
Uncalibrated DC & \textbf{1.71} & \textbf{0.32} & \textbf{0.47} & 0.58 & \textbf{0.16} \\
Feature Masked DC ($33\%$) & 1.77 & 0.29 & 0.37 & 0.69 & 0.21 \\
Feature Masked DC ($67\%$) & 1.84 & 0.24 & 0.24 & 0.81 & 0.27 \\
Feature Masked DC ($100\%$) & 1.91 & 0.18 & -- & 1.00 & 0.39 \\
\bottomrule
\end{tabular}
\end{table*}

\section{Production Data: Per-Policy Breakdown}
\label{app:prod_detailed}

\par{
We present the per-policy results of the production experiment in Section~\ref{sec:exp_prod}. Table~\ref{tab:prod_per_policy} reports the Value MAE\% of each estimator for each policy, where $P_1$ is the logging policy and $P_2$--$P_6$ are the five candidates; as in Table~\ref{tab:prod_results}, estimated weights are clipped at $c=20$ and all values are averaged over $50$ random subsamples. Among the DM estimators, Weighted DM has a lower Value MAE\% than Unweighted DM on every candidate ($10.91$--$12.24$ versus $13.30$--$15.19$) and a lower Value MAE\% than Per-policy Weighted DM on four of the five candidates, the exception being $P_6$ ($10.91$ versus $10.51$). IPS and both DR estimators have a lower Value MAE\% than all DM estimators on every candidate: Unweighted DR ranges from $1.90$ to $2.48$ and Weighted DR from $2.02$ to $2.81$, while IPS is lowest on $P_2$ and $P_4$ ($1.87$) but reaches $6.87$--$7.31$ on $P_3$, $P_5$, and $P_6$. On the logging policy $P_1$, the density ratio is identically one, so IPS and both DR estimators reduce to the mean of the logged rewards and have zero error by construction, and Per-policy Weighted DM coincides with Unweighted DM ($14.40$).
}

\begin{table}[t]
\centering
\caption{Production data: per-policy Value MAE\% for each estimator at clip $c=20$, mean $\pm$ SE over 50 random data subsamples (SE = std$/\sqrt{50}$). $P_1$ is the logging policy and $P_2$--$P_6$ are the candidate policies. Best per row among candidates in \textbf{bold}.}
\label{tab:prod_per_policy}
\footnotesize
\setlength{\tabcolsep}{4pt}
\renewcommand{\arraystretch}{1.05}
\begin{tabular}{l cccccc}
\toprule
Policy & IPS & Unweighted DM & Weighted DM & Per-policy Weighted DM & Unweighted DR & Weighted DR \\
\midrule
$P_1$ (logging) & $0.00\,{\pm}\,0.00$ & $14.40\,{\pm}\,1.45$ & $11.94\,{\pm}\,1.40$ & $14.40\,{\pm}\,1.45$ & $0.00\,{\pm}\,0.00$ & $0.00\,{\pm}\,0.00$ \\
\midrule
$P_2$ & $\mathbf{1.87}\,{\pm}\,0.18$ & $15.19\,{\pm}\,1.47$ & $12.16\,{\pm}\,1.44$ & $13.42\,{\pm}\,1.73$ & $1.90\,{\pm}\,0.20$ & $2.26\,{\pm}\,0.20$ \\
$P_3$ & $6.87\,{\pm}\,0.33$ & $13.37\,{\pm}\,1.33$ & $11.00\,{\pm}\,1.31$ & $12.16\,{\pm}\,1.27$ & $\mathbf{2.44}\,{\pm}\,0.25$ & $2.81\,{\pm}\,0.29$ \\
$P_4$ & $\mathbf{1.87}\,{\pm}\,0.22$ & $14.93\,{\pm}\,1.52$ & $12.24\,{\pm}\,1.39$ & $12.75\,{\pm}\,1.59$ & $2.09\,{\pm}\,0.20$ & $2.02\,{\pm}\,0.19$ \\
$P_5$ & $6.94\,{\pm}\,0.33$ & $13.30\,{\pm}\,1.32$ & $11.08\,{\pm}\,1.25$ & $11.36\,{\pm}\,1.25$ & $\mathbf{2.48}\,{\pm}\,0.24$ & $2.54\,{\pm}\,0.28$ \\
$P_6$ & $7.31\,{\pm}\,0.31$ & $13.38\,{\pm}\,1.27$ & $10.91\,{\pm}\,1.24$ & $10.51\,{\pm}\,1.30$ & $\mathbf{2.47}\,{\pm}\,0.29$ & $2.63\,{\pm}\,0.30$ \\
\bottomrule
\end{tabular}
\end{table}

\clearpage
\section*{NeurIPS Paper Checklist}

\begin{enumerate}

\item {\bf Claims}
    \item[] Question: Do the main claims made in the abstract and introduction accurately reflect the paper's contributions and scope?
    \item[] Answer: \answerYes{}
    \item[] Justification: The abstract and introduction state three contributions, all delivered in the paper: (1) a mixture-weighted reward model training method using density ratios estimated from a multiclass domain classifier, requiring no logged action propensities (Sections~\ref{sec:mixture_weighted} and~\ref{sec:domain_classifier}); (2) a high-probability bound on each candidate's replay Direct Method value error, separating an overlap term, a cold-start term, and a replay concentration term, and an asymptotic offline-vs-A/B routing rule derived from it (Theorem~\ref{thm:error_decomp} and Corollary~\ref{cor:decision_rule}, Appendix~\ref{app:error_analysis}); and (3) empirical evaluation on a configurable simulator (Section~\ref{sec:synthetic_experiments}) and on production traffic from a large-scale CTR system (Section~\ref{sec:exp_prod}), where the mixture-weighted reward model outperforms its unweighted counterpart in both settings. Scope and limitations of these claims (replay assumption, dependence on domain-classifier signal, asymptotic routing rule, no evaluation under live traffic) are explicitly stated in Section~\ref{sec:conclusions}.
    \item[] Guidelines:
    \begin{itemize}
        \item The answer \answerNA{} means that the abstract and introduction do not include the claims made in the paper.
        \item The abstract and/or introduction should clearly state the claims made, including the contributions made in the paper and important assumptions and limitations. A \answerNo{} or \answerNA{} answer to this question will not be perceived well by the reviewers.
        \item The claims made should match theoretical and experimental results, and reflect how much the results can be expected to generalize to other settings.
        \item It is fine to include aspirational goals as motivation as long as it is clear that these goals are not attained by the paper.
    \end{itemize}

\item {\bf Limitations}
    \item[] Question: Does the paper discuss the limitations of the work performed by the authors?
    \item[] Answer: \answerYes{}
    \item[] Justification: We provide a dedicated \textbf{Limitations} paragraph at the end of Section~\ref{sec:conclusions} (Conclusions). It states the replay assumption and the assumption that an impression's reward distribution does not depend on the policy that produced it, notes that computing the cold-start mass with the learned classifier requires calibration or error correction and that the weights help only when the representation retains policy-induced differences (Appendix~\ref{sec:dc_reliability_experiment}), and states that the routing rule is asymptotic and that its performance under live traffic is left to future work.
    \item[] Guidelines:
    \begin{itemize}
        \item The answer \answerNA{} means that the paper has no limitation while the answer \answerNo{} means that the paper has limitations, but those are not discussed in the paper.
        \item The authors are encouraged to create a separate ``Limitations'' section in their paper.
        \item The paper should point out any strong assumptions and how robust the results are to violations of these assumptions (e.g., independence assumptions, noiseless settings, model well-specification, asymptotic approximations only holding locally). The authors should reflect on how these assumptions might be violated in practice and what the implications would be.
        \item The authors should reflect on the scope of the claims made, e.g., if the approach was only tested on a few datasets or with a few runs. In general, empirical results often depend on implicit assumptions, which should be articulated.
        \item The authors should reflect on the factors that influence the performance of the approach. For example, a facial recognition algorithm may perform poorly when image resolution is low or images are taken in low lighting. Or a speech-to-text system might not be used reliably to provide closed captions for online lectures because it fails to handle technical jargon.
        \item The authors should discuss the computational efficiency of the proposed algorithms and how they scale with dataset size.
        \item If applicable, the authors should discuss possible limitations of their approach to address problems of privacy and fairness.
        \item While the authors might fear that complete honesty about limitations might be used by reviewers as grounds for rejection, a worse outcome might be that reviewers discover limitations that aren't acknowledged in the paper. The authors should use their best judgment and recognize that individual actions in favor of transparency play an important role in developing norms that preserve the integrity of the community. Reviewers will be specifically instructed to not penalize honesty concerning limitations.
    \end{itemize}

\item {\bf Theory assumptions and proofs}
    \item[] Question: For each theoretical result, does the paper provide the full set of assumptions and a complete (and correct) proof?
    \item[] Answer: \answerYes{}
    \item[] Justification: All theoretical results are numbered, cross-referenced, and stated together with their assumptions. The paper contains: Definition~\ref{def:overlap_coldstart} of the overlap and cold-start regions; Assumption~\ref{ass:consistency}, conditions \ref{ass:A1}--\ref{ass:A4} used by Theorem~\ref{thm:error_decomp}; Theorem~\ref{thm:error_decomp}, the high-probability replay Direct Method value-error bound, with the full proof in Appendix~\ref{app:error_analysis}; Corollary~\ref{cor:regret}, the top-1 selection regret bound, with proof in Appendix~\ref{app:regret_proof}; and Corollary~\ref{cor:decision_rule}, the asymptotic offline-vs-A/B routing rule, with proof in Appendix~\ref{app:decision_rule_proof}.
    \item[] Guidelines:
    \begin{itemize}
        \item The answer \answerNA{} means that the paper does not include theoretical results.
        \item All the theorems, formulas, and proofs in the paper should be numbered and cross-referenced.
        \item All assumptions should be clearly stated or referenced in the statement of any theorems.
        \item The proofs can either appear in the main paper or the supplemental material, but if they appear in the supplemental material, the authors are encouraged to provide a short proof sketch to provide intuition.
        \item Inversely, any informal proof provided in the core of the paper should be complemented by formal proofs provided in appendix or supplemental material.
        \item Theorems and Lemmas that the proof relies upon should be properly referenced.
    \end{itemize}

    \item {\bf Experimental result reproducibility}
    \item[] Question: Does the paper fully disclose all the information needed to reproduce the main experimental results of the paper to the extent that it affects the main claims and/or conclusions of the paper (regardless of whether the code and data are provided or not)?
    \item[] Answer: \answerYes{}
    \item[] Justification: For the synthetic experiments, the simulator design (user/ad feature distributions, oracle reward function, logging-policy and candidate-policy kernel parameterization), the per-trial pipeline, the value-multiplier and ESS-target grids, and the clip-threshold sweep are documented in detail in Appendix~\ref{app:simulator_details}, and the released code reproduces the synthetic results in Table~\ref{tab:method_comparison_regimes} and Appendices~\ref{app:detailed_results},~\ref{app:exp2_clipping} and \ref{sec:dc_reliability_experiment} (see the open-access answer below). For the production experiment, production data, embeddings, and trained production models will not be released; Section~\ref{sec:exp_prod} documents the evaluation protocol (one logging and five candidate policies, the empirical click-through rate as each policy's true value, $50$ random data subsamples, clip $c=20$), and the results are reported in main paper Table~\ref{tab:prod_results} with the per-policy breakdown in Appendix~\ref{app:prod_detailed}.
    \item[] Guidelines:
    \begin{itemize}
        \item The answer \answerNA{} means that the paper does not include experiments.
        \item If the paper includes experiments, a \answerNo{} answer to this question will not be perceived well by the reviewers: Making the paper reproducible is important, regardless of whether the code and data are provided or not.
        \item If the contribution is a dataset and\slash or model, the authors should describe the steps taken to make their results reproducible or verifiable.
        \item Depending on the contribution, reproducibility can be accomplished in various ways. For example, if the contribution is a novel architecture, describing the architecture fully might suffice, or if the contribution is a specific model and empirical evaluation, it may be necessary to either make it possible for others to replicate the model with the same dataset, or provide access to the model. In general. releasing code and data is often one good way to accomplish this, but reproducibility can also be provided via detailed instructions for how to replicate the results, access to a hosted model (e.g., in the case of a large language model), releasing of a model checkpoint, or other means that are appropriate to the research performed.
        \item While NeurIPS does not require releasing code, the conference does require all submissions to provide some reasonable avenue for reproducibility, which may depend on the nature of the contribution. For example
        \begin{enumerate}
            \item If the contribution is primarily a new algorithm, the paper should make it clear how to reproduce that algorithm.
            \item If the contribution is primarily a new model architecture, the paper should describe the architecture clearly and fully.
            \item If the contribution is a new model (e.g., a large language model), then there should either be a way to access this model for reproducing the results or a way to reproduce the model (e.g., with an open-source dataset or instructions for how to construct the dataset).
            \item We recognize that reproducibility may be tricky in some cases, in which case authors are welcome to describe the particular way they provide for reproducibility. In the case of closed-source models, it may be that access to the model is limited in some way (e.g., to registered users), but it should be possible for other researchers to have some path to reproducing or verifying the results.
        \end{enumerate}
    \end{itemize}

\item {\bf Open access to data and code}
    \item[] Question: Does the paper provide open access to the data and code, with sufficient instructions to faithfully reproduce the main experimental results, as described in supplemental material?
    \item[] Answer: \answerYes{}
    \item[] Justification: The simulator and synthetic evaluation code are publicly released at \url{https://github.com/facebookresearch/ad_ranking_simulation} (footnote in Section~\ref{sec:synthetic_experiments}). Its README specifies the environment and the exact commands that reproduce Table~\ref{tab:method_comparison_regimes} and the synthetic results in Appendices~\ref{app:detailed_results},~\ref{app:exp2_clipping} and \ref{sec:dc_reliability_experiment}. Production data, embeddings, and trained production models will not be released.
    \item[] Guidelines:
    \begin{itemize}
        \item The answer \answerNA{} means that paper does not include experiments requiring code.
        \item Please see the NeurIPS code and data submission guidelines (\url{https://neurips.cc/public/guides/CodeSubmissionPolicy}) for more details.
        \item While we encourage the release of code and data, we understand that this might not be possible, so \answerNo{} is an acceptable answer. Papers cannot be rejected simply for not including code, unless this is central to the contribution (e.g., for a new open-source benchmark).
        \item The instructions should contain the exact command and environment needed to run to reproduce the results. See the NeurIPS code and data submission guidelines (\url{https://neurips.cc/public/guides/CodeSubmissionPolicy}) for more details.
        \item The authors should provide instructions on data access and preparation, including how to access the raw data, preprocessed data, intermediate data, and generated data, etc.
        \item The authors should provide scripts to reproduce all experimental results for the new proposed method and baselines. If only a subset of experiments are reproducible, they should state which ones are omitted from the script and why.
        \item At submission time, to preserve anonymity, the authors should release anonymized versions (if applicable).
        \item Providing as much information as possible in supplemental material (appended to the paper) is recommended, but including URLs to data and code is permitted.
    \end{itemize}

\item {\bf Experimental setting/details}
    \item[] Question: Does the paper specify all the training and test details (e.g., data splits, hyperparameters, how they were chosen, type of optimizer) necessary to understand the results?
    \item[] Answer: \answerYes{}
    \item[] Justification: For the synthetic experiments, Section~\ref{sec:synthetic_experiments} and the per-trial procedure in Appendix~\ref{app:per_trial_procedure} specify the data split ($70{,}000$ logged impressions in four disjoint folds: $20{,}000$ for domain-classifier training, $20{,}000$ for reward-model training, $10{,}000$ for calibration, and $20{,}000$ for evaluation), the architectures (one-hidden-layer MLPs with $64$ units for the domain classifier and the reward models), the optimizer (Adam with learning rate $0.01$ for the domain classifier and $10^{-3}$ for the reward models, batch size $512$), the training epochs ($4$ for the domain classifier, $30$ for the reward models), the temperature calibration ($5$-fold out-of-fold logits for the domain classifier, the calibration fold for the reward models), and the weight clip ($c=100$), with the clip sweep in Appendix~\ref{app:exp2_clipping}. For the production experiment, Section~\ref{sec:exp_prod} specifies the evaluation protocol ($50$ random data subsamples, clip $c=20$) and that the shared and per-policy reward models use the same data splits, architectures, optimization, calibration, and clipping settings.
    \item[] Guidelines:
    \begin{itemize}
        \item The answer \answerNA{} means that the paper does not include experiments.
        \item The experimental setting should be presented in the core of the paper to a level of detail that is necessary to appreciate the results and make sense of them.
        \item The full details can be provided either with the code, in appendix, or as supplemental material.
    \end{itemize}

\item {\bf Experiment statistical significance}
    \item[] Question: Does the paper report error bars suitably and correctly defined or other appropriate information about the statistical significance of the experiments?
    \item[] Answer: \answerYes{}
    \item[] Justification: Main paper Table~\ref{tab:method_comparison_regimes} reports means over the $350$ synthetic trials ($50$ seeds $\times$ $7$ ESS targets), grouped by ESS regime, and Table~\ref{tab:prod_results} reports means over $50$ random data subsamples. Standard errors of the mean (SE $=$ std$/\sqrt{n}$) are reported in the appendix breakdowns: Tables~\ref{tab:appendix_acc1}, \ref{tab:appendix_regret}, and~\ref{tab:appendix_mae} ($n$ is the number of trials in each realized-ESS bin), Table~\ref{tab:per_policy_mae} ($n$ is the number of trials containing the candidate), Figure~\ref{fig:clip_sensitivity}, and Table~\ref{tab:prod_per_policy} ($n=50$). The SE captures the per-trial randomness (oracle, users, ads, logging policy) for the synthetic experiments and the random data subsample for the production experiment.
    \item[] Guidelines:
    \begin{itemize}
        \item The answer \answerNA{} means that the paper does not include experiments.
        \item The authors should answer \answerYes{} if the results are accompanied by error bars, confidence intervals, or statistical significance tests, at least for the experiments that support the main claims of the paper.
        \item The factors of variability that the error bars are capturing should be clearly stated (for example, train/test split, initialization, random drawing of some parameter, or overall run with given experimental conditions).
        \item The method for calculating the error bars should be explained (closed form formula, call to a library function, bootstrap, etc.)
        \item The assumptions made should be given (e.g., Normally distributed errors).
        \item It should be clear whether the error bar is the standard deviation or the standard error of the mean.
        \item It is OK to report 1-sigma error bars, but one should state it. The authors should preferably report a 2-sigma error bar than state that they have a 96\% CI, if the hypothesis of Normality of errors is not verified.
        \item For asymmetric distributions, the authors should be careful not to show in tables or figures symmetric error bars that would yield results that are out of range (e.g., negative error rates).
        \item If error bars are reported in tables or plots, the authors should explain in the text how they were calculated and reference the corresponding figures or tables in the text.
    \end{itemize}

\item {\bf Experiments compute resources}
    \item[] Question: For each experiment, does the paper provide sufficient information on the computer resources (type of compute workers, memory, time of execution) needed to reproduce the experiments?
    \item[] Answer: \answerYes{}
    \item[] Justification: The synthetic results were produced on CPU. On a laptop CPU, one seed takes about $3.5$ minutes for Table~\ref{tab:method_comparison_regimes} and $3$ minutes for Appendix~\ref{app:exp2_clipping} in a single process; a seed of Appendix~\ref{sec:dc_reliability_experiment} takes longer because it also trains the per-policy reward models in each of its five classifier variants. The $50$ seeds of each experiment are independent, so they can be split across processes. The production experiment runs on internal infrastructure, and its compute is not reported. The contribution is methodological and is not bottlenecked by specialized infrastructure.
    \item[] Guidelines:
    \begin{itemize}
        \item The answer \answerNA{} means that the paper does not include experiments.
        \item The paper should indicate the type of compute workers CPU or GPU, internal cluster, or cloud provider, including relevant memory and storage.
        \item The paper should provide the amount of compute required for each of the individual experimental runs as well as estimate the total compute.
        \item The paper should disclose whether the full research project required more compute than the experiments reported in the paper (e.g., preliminary or failed experiments that didn't make it into the paper).
    \end{itemize}

\item {\bf Code of ethics}
    \item[] Question: Does the research conducted in the paper conform, in every respect, with the NeurIPS Code of Ethics \url{https://neurips.cc/public/EthicsGuidelines}?
    \item[] Answer: \answerYes{}
    \item[] Justification: The research is a methodological contribution to offline policy evaluation. The synthetic experiments use a simulator; the production experiment uses aggregated impression logs from a large-scale industrial CTR ranking system, with all per-policy results in the paper represented as $P_1$--$P_6$ and reported only as aggregate metrics. We do not release datasets, and do not introduce models intended for or applicable to surveillance, discrimination, or other prohibited use cases.
    \item[] Guidelines:
    \begin{itemize}
        \item The answer \answerNA{} means that the authors have not reviewed the NeurIPS Code of Ethics.
        \item If the authors answer \answerNo, they should explain the special circumstances that require a deviation from the Code of Ethics.
        \item The authors should make sure to preserve anonymity (e.g., if there is a special consideration due to laws or regulations in their jurisdiction).
    \end{itemize}

\item {\bf Broader impacts}
    \item[] Question: Does the paper discuss both potential positive societal impacts and negative societal impacts of the work performed?
    \item[] Answer: \answerNA{}
    \item[] Justification: The paper is foundational methodology that improves the reliability of one specific intermediate step (offline policy evaluation) in industrial ad-ranking pipelines. It does not introduce new ML capabilities, new data sources, or new deployment surfaces, and it is not tied to any particular downstream application beyond standard ad ranking. We do not foresee direct positive or negative societal impacts of the methodological contribution itself beyond the operational reliability gains within the host ad system.
    \item[] Guidelines:
    \begin{itemize}
        \item The answer \answerNA{} means that there is no societal impact of the work performed.
        \item If the authors answer \answerNA{} or \answerNo, they should explain why their work has no societal impact or why the paper does not address societal impact.
        \item Examples of negative societal impacts include potential malicious or unintended uses (e.g., disinformation, generating fake profiles, surveillance), fairness considerations (e.g., deployment of technologies that could make decisions that unfairly impact specific groups), privacy considerations, and security considerations.
        \item The conference expects that many papers will be foundational research and not tied to particular applications, let alone deployments. However, if there is a direct path to any negative applications, the authors should point it out. For example, it is legitimate to point out that an improvement in the quality of generative models could be used to generate Deepfakes for disinformation. On the other hand, it is not needed to point out that a generic algorithm for optimizing neural networks could enable people to train models that generate Deepfakes faster.
        \item The authors should consider possible harms that could arise when the technology is being used as intended and functioning correctly, harms that could arise when the technology is being used as intended but gives incorrect results, and harms following from (intentional or unintentional) misuse of the technology.
        \item If there are negative societal impacts, the authors could also discuss possible mitigation strategies (e.g., gated release of models, providing defenses in addition to attacks, mechanisms for monitoring misuse, mechanisms to monitor how a system learns from feedback over time, improving the efficiency and accessibility of ML).
    \end{itemize}

\item {\bf Safeguards}
    \item[] Question: Does the paper describe safeguards that have been put in place for responsible release of data or models that have a high risk for misuse (e.g., pre-trained language models, image generators, or scraped datasets)?
    \item[] Answer: \answerNA{}
    \item[] Justification: The paper releases no data and no models, and the methodological contribution (offline policy evaluation via mixture-weighted reward modeling) does not produce assets with a high risk of misuse such as pre-trained generative models, image generators, or scraped datasets.
    \item[] Guidelines:
    \begin{itemize}
        \item The answer \answerNA{} means that the paper poses no such risks.
        \item Released models that have a high risk for misuse or dual-use should be released with necessary safeguards to allow for controlled use of the model, for example by requiring that users adhere to usage guidelines or restrictions to access the model or implementing safety filters.
        \item Datasets that have been scraped from the Internet could pose safety risks. The authors should describe how they avoided releasing unsafe images.
        \item We recognize that providing effective safeguards is challenging, and many papers do not require this, but we encourage authors to take this into account and make a best faith effort.
    \end{itemize}

\item {\bf Licenses for existing assets}
    \item[] Question: Are the creators or original owners of assets (e.g., code, data, models), used in the paper, properly credited and are the license and terms of use explicitly mentioned and properly respected?
    \item[] Answer: \answerNA{}
    \item[] Justification: The paper does not use externally-licensed datasets or models. The synthetic experiments use a custom-built simulator developed by the authors, and the production experiment uses internal proprietary impression logs from the host institution's CTR ranking system; no third-party assets requiring license attribution are used.
    \item[] Guidelines:
    \begin{itemize}
        \item The answer \answerNA{} means that the paper does not use existing assets.
        \item The authors should cite the original paper that produced the code package or dataset.
        \item The authors should state which version of the asset is used and, if possible, include a URL.
        \item The name of the license (e.g., CC-BY 4.0) should be included for each asset.
        \item For scraped data from a particular source (e.g., website), the copyright and terms of service of that source should be provided.
        \item If assets are released, the license, copyright information, and terms of use in the package should be provided. For popular datasets, \url{paperswithcode.com/datasets} has curated licenses for some datasets. Their licensing guide can help determine the license of a dataset.
        \item For existing datasets that are re-packaged, both the original license and the license of the derived asset (if it has changed) should be provided.
        \item If this information is not available online, the authors are encouraged to reach out to the asset's creators.
    \end{itemize}

\item {\bf New assets}
    \item[] Question: Are new assets introduced in the paper well documented and is the documentation provided alongside the assets?
    \item[] Answer: \answerYes{}
    \item[] Justification: The paper releases the simulator and synthetic evaluation code at \url{https://github.com/facebookresearch/ad_ranking_simulation}. Its README documents the environment, the commands that reproduce each synthetic table and figure, and the code layout, and the code is released under the CC BY-NC 4.0 license included in the repository.
    \item[] Guidelines:
    \begin{itemize}
        \item The answer \answerNA{} means that the paper does not release new assets.
        \item Researchers should communicate the details of the dataset\slash code\slash model as part of their submissions via structured templates. This includes details about training, license, limitations, etc.
        \item The paper should discuss whether and how consent was obtained from people whose asset is used.
        \item At submission time, remember to anonymize your assets (if applicable). You can either create an anonymized URL or include an anonymized zip file.
    \end{itemize}

\item {\bf Crowdsourcing and research with human subjects}
    \item[] Question: For crowdsourcing experiments and research with human subjects, does the paper include the full text of instructions given to participants and screenshots, if applicable, as well as details about compensation (if any)?
    \item[] Answer: \answerNA{}
    \item[] Justification: The paper does not involve crowdsourcing or research with human subjects. The synthetic experiments use a simulator with no real users; the production experiment analyzes pre-existing aggregated impression logs from a deployed industrial CTR ranking system without any new human-subject data collection.
    \item[] Guidelines:
    \begin{itemize}
        \item The answer \answerNA{} means that the paper does not involve crowdsourcing nor research with human subjects.
        \item Including this information in the supplemental material is fine, but if the main contribution of the paper involves human subjects, then as much detail as possible should be included in the main paper.
        \item According to the NeurIPS Code of Ethics, workers involved in data collection, curation, or other labor should be paid at least the minimum wage in the country of the data collector.
    \end{itemize}

\item {\bf Institutional review board (IRB) approvals or equivalent for research with human subjects}
    \item[] Question: Does the paper describe potential risks incurred by study participants, whether such risks were disclosed to the subjects, and whether Institutional Review Board (IRB) approvals (or an equivalent approval/review based on the requirements of your country or institution) were obtained?
    \item[] Answer: \answerNA{}
    \item[] Justification: The paper does not involve research with human subjects; no IRB or equivalent approval was required.
    \item[] Guidelines:
    \begin{itemize}
        \item The answer \answerNA{} means that the paper does not involve crowdsourcing nor research with human subjects.
        \item Depending on the country in which research is conducted, IRB approval (or equivalent) may be required for any human subjects research. If you obtained IRB approval, you should clearly state this in the paper.
        \item We recognize that the procedures for this may vary significantly between institutions and locations, and we expect authors to adhere to the NeurIPS Code of Ethics and the guidelines for their institution.
        \item For initial submissions, do not include any information that would break anonymity (if applicable), such as the institution conducting the review.
    \end{itemize}

\item {\bf Declaration of LLM usage}
    \item[] Question: Does the paper describe the usage of LLMs if it is an important, original, or non-standard component of the core methods in this research? Note that if the LLM is used only for writing, editing, or formatting purposes and does \emph{not} impact the core methodology, scientific rigor, or originality of the research, declaration is not required.
    \item[] Answer: \answerNA{}
    \item[] Justification: The core method development in this research does not involve large language models as any important, original, or non-standard component. The proposed estimators (mixture-weighted reward modeling, density-ratio estimation via a multiclass domain classifier, doubly-robust estimation) are statistical and do not use LLMs.
    \item[] Guidelines:
    \begin{itemize}
        \item The answer \answerNA{} means that the core method development in this research does not involve LLMs as any important, original, or non-standard components.
        \item Please refer to our LLM policy in the NeurIPS handbook for what should or should not be described.
    \end{itemize}

\end{enumerate}

\end{document}